\newcommand{\be}{\begin{equation}}
\newcommand{\en}{\end{equation}}
\newcommand{\bea}{\begin{eqnarray}}
\newcommand{\ena}{\end{eqnarray}}
\newcommand{\beano}{\begin{eqnarray*}}
\newcommand{\enano}{\end{eqnarray*}}
\newcommand{\bee}{\begin{enumerate}}
\newcommand{\ene}{\end{enumerate}}
\newcommand{\Hil}{{\cal H}}
\newcommand{\F}{{\cal F}}
\newcommand{\Lc}{{\cal L}}
\newcommand{\Sc}{{\cal S}}
\newcommand{\C}{{\cal C}}
\newcommand{\E}{{\cal E}}
\newcommand{\1}{1 \!\! 1}
\newtheorem{thm}{Theorem}
\newtheorem{lemma}[thm]{Lemma}
\newtheorem{prop}[thm]{Proposition}
\newtheorem{defn}[thm]{Definition}
\newenvironment{proof}{\noindent {\bf Proof:}}{\hfill$\Box$}
\begin{document}
%\nocite{*}
\thispagestyle{empty}

\vspace*{.7cm}

\begin{center}
{\Large \bf Eigenvalues of non-hermitian matrices: a dynamical and an iterative approach. Application to a truncated Swanson model}   \vspace{2cm}\\

{\large F. Bagarello}
%\footnote[1]{ Dipartimento di Matematica ed Applicazioni,
%Fac.Ingegneria, Universit\`a di Palermo, I - 90128  Palermo, Italy}
\vspace{3mm}\\[0pt]
Dipartimento di Ingegneria, \\[0pt]
Universit\`{a} di Palermo, I - 90128 Palermo,  and\\
\, INFN, Sezione di Napoli, Italy.\\[0pt]
E-mail: fabio.bagarello@unipa.it\\[0pt]
home page: www1.unipa.it/fabio.bagarello\\[0pt]

\vspace{4mm}

{\large F. Gargano}
%\footnote[1]{ Dipartimento di Matematica ed Applicazioni,
%Fac.Ingegneria, Universit\`a di Palermo, I - 90128  Palermo, Italy}
\vspace{3mm}\\[0pt]
Dipartimento di Ingegneria, \\[0pt]
Universit\`{a} di Palermo, I - 90128 Palermo,  and\\
E-mail: francesco.gargano@unipa.it\\[0pt]

\vspace{4mm}

\end{center}

\begin{abstract}
We propose two different strategies to find eigenvalues and eigenvectors of a given, not necessarily Hermitian, matrix $A$. Our methods apply also to the case of complex eigenvalues, making the strategies interesting for applications to physics, and to pseudo-hermitian quantum mechanics in particular. 
We first consider a {\em dynamical} approach, based on a pair of ordinary differential equations defined in terms of the matrix $A$ and of its adjoint $A^\dagger$. Then we consider an extension of the so-called power method, for which we prove a fixed point theorem for $A\neq A^\dagger$ useful in the determination of the eigenvalues of $A$ and $A^\dagger$. The two strategies are applied to some explicit problems. In particular, we compute the eigenvalues and the eigenvectors of the matrix arising from a recently proposed quantum mechanical system, the {\em truncated Swanson model}, and we check some asymptotic features of the Hessenberg matrix.
\end{abstract}

\vspace{.7cm}

{\bf Keywords}:  Eigenvalues and eigenvectors for non-Hermitian operators; Truncated Swanson model; Hessenberg matrix.

\vfill

\newpage

% Section 1
\section{Introduction}

The problem of finding eigenvalues and eigenvectors of a given (finite-dimensional) matrix $A$ is very old and extremely relevant. The number of applications in which this is needed is countless. Just to cite a single one, the most relevant for us, eigenvalues of a quantum mechanical Hamiltonian $H$ are the energies allowed for the system $\Sc$ one is investigating, \cite{messiah}. In ordinary quantum mechanics, $H$ is taken to be Hermitian\footnote{All along this paper Hermitian and self-adjoint will be used as synonimous.}: $H=H^\dagger$, \cite{rs}. Here $H^\dagger$ is the (Dirac)-adjoint of $H$, \cite{baginbook}: in practice, if $H$ is an $n\times n$ matrix, $H^\dagger$ is simply the transpose and complex-conjugate\footnote{More in general, $H^\dagger$ is the operator satisfying the equality $\left<f,Hg\right>=\left<H^\dagger f,g\right>$, for all $f,g\in\Hil$, the Hilbert space where $\Sc$ lives, and $\left<.,.\right>$ is its scalar product.} of $H$. The Hamiltonian is not the only operator of $\Sc$ which is usually assumed to be Hermitian. Other operators usually can be associated to $\Sc$ with the same property: these are the so-called {\em observables} of $\Sc$. Clearly, the eigenvalues of all these operators are also necessarily real. Few decades ago, with the seminal paper \cite{ben1}, it became clear to the community of physicists (the mathematicians were already aware of this!) that reality of eigenvalues is implied by Hermiticity, but not vice-versa: non-Hermitian operators exist, both in finite and in infinite dimensional Hilbert spaces, whose eigenvalues are all real, see \cite{baginbagbook} and references therein for several examples. Moreover, it is now clear that some of the eigenvalues of a {\em physically relevant} (non Hermitian) Hamiltonian could easily be complex. This is what happens, for instance, when $PT$-symmetry is broken, \cite{benbook}, i.e. when the parameters of the Hamiltonian change in a proper way, see \cite{jli,guo} and references therein, or when some effective Hamiltonian is used to describe gain and loss effects. With this in mind it should be clear why we are so interested in finding possibly complex eigenvalues, and their related eigenvectors, of a given matrix, describing some physical characteristic of $\Sc$. Needless to say, many softwares exist in the market since many years which compute eigenvalues and eigenvectors of a given matrix. Some of them works mainly numerically (like Matlab). Others, like Mathematica, produce often analytical results. However other techniques may be relevant when the dimension of the matrices become larger and larger, as, for instances, in the case of the matrices involved in ranking web pages. In this case, power method looks more efficient, other than being mathematically quite interesting. We find also rather elegant the dynamical approach, considered in \cite{tang}, which associates a suitable differential equation to a matrix whose eigenvalues we need to compute. A review of strategies can be found in \cite{rev}.

The aim of this paper is to merge these two aspects, complexity of eigenvalues and mathematical strategies designed especially for large matrices, to propose new algorithms, based on a solid mathematical ground, to compute eigenvalues of  non Hermitian matrices. In Section \ref{sect2} we extend the dynamical approach analyzed in \cite{tang}. Section \ref{sec:fixedpoint} contains our {\em modified power method}. Applications  are discussed in Section \ref{sect4}. In particular, we propose an application in the realm of quantum mechanics to the finite-dimensional Swanson model, \cite{bag2018}. Conclusions are given in Section \ref{sect5}.

\section{Dynamical approach: introductory results for $A=A^\dagger$}\label{sect2}
In the first part of this section, for reader's convenience, we briefly review the approach proposed in \cite{tang}, while in the second part we will propose a possible extention of this approach, useful for our particular pourpouses.

Let $x\in\Hil=\mathbb{C}^n$ be an $n$-dimensional vector, and $A=A^\dagger$ an Hermitian matrix on  $\Hil$. Of course, since $A$ is Hermitian, its eigenvalues are necessarily real\footnote{In \cite{tang} the Hilbert space was taken to be $\mathbb{R}^n$. We prefer to adopt this more general choice here, also in view of our interest in quantum mechanics.}. We want to compute eigenvalues and eigenvectors of $A$ (or, at least, some of them).

The approach we discuss here is {\em dynamical}: we introduce a differential equation on $\mathbb{R}^n$, and we show that the solution of this equation converges to the eigenstate of $A$ corresponding to its highest eigenvalue, the so-called {\em leading eigenvector}. We start considering a vector $x(t)\in\mathbb{R}^n\subset \Hil$, depending on a continuous parameter $t\geq0$, and we assume $x(t)$ satisfies the following equation:
\be \frac{dx(t)}{dt}=-x(t)+f(x(t))=\|x(t)\|^2Ax(t)-\left<x(t),Ax(t)\right>x(t),
\label{21}\en where we have defined $f(x)=\left(\|x\|^2A+(1-\left<x,Ax\right>)\right)x$. It is clear that $f$ is a non-linear function from $\Hil$ into $\Hil$. 

\begin{defn}\label{def1}
	A vector $\xi(t)\neq0$ is an equilibrium point of (\ref{21}) if $f(\xi(t))=\xi(t)$, for all $t\geq0$.
\end{defn}
It is clear that if $\xi(t)$ is an equilibrium point of (\ref{21}), then $\xi(t)=\xi(0)=:\xi$ is an eigenstate of $A$ with eigenvalue $\lambda_\xi=\frac{\left<\xi,A\xi\right>}{\|\xi\|^2}$, and vice-versa. Notice that  $\lambda_\xi$ is well defined, since $\xi\neq0$, and therefore $\|\xi\|>0$. Moreover, since $A=A^\dagger$, $\lambda_\xi\in\mathbb{R}$.

Following \cite{tang} we look for a solution of (\ref{21}) in terms of the (unknown) eigenvectors of $A$: $Ae_j=\lambda_je_j$, $j=1,2,\ldots,n$: $x(t)=\sum_{i=1}^{n}c_i(t)e_i$. The equation for $c_i(t)$ can be deduced using the ortogonality of the $e_i$'s and the equality $\|x(t)\|^2=\|x(0)\|^2$, which can be proved easily since
\be
\frac{d\|x(t)\|^2}{dt}=\left<\dot x(t),x(t)\right>+\left<x(t),\dot x(t)\right>=0,
\label{add5}\en
after inserting twice (\ref{21}). We get
\be
\dot c_i(t)=F_i(t)c_i(t), \qquad F_i(t)=\|x(0)\|^2\lambda_i-D(t),
\label{22}\en
where $D(t)=\sum_{k=1}^{n}\lambda_k|c_k(t)|^2$. Notice that $F_i(t)$ is a real function of time. Rewriting the differential equation for $c_i(t)$ in its integral form, 
$
c_i(t)=e^{\int_{0}^{t}F_i(s)\,ds}c_i(0),
$
we see that, if $c_i(0)\in\mathbb{R}$, then $c_i(t)\in\mathbb{R}$ for all $t\geq0$. Working under this natural assumption we rewrite $D(t)=\sum_{k=1}^{n}\lambda_kc_k(t)^2$. The explicit form of the $c_i(t)$ can now be deduced and we get
\be
c_i(t)=\frac{c_i(0)\|x(0)\|}{\sqrt{\sum_{k=1}^{n}c_k(0)^2e^{2\|x(0)\|^2(\lambda_k-\lambda_j)t}}},
\label{23}\en
so that the general solution of (\ref{21}) is
\be x(t)=\sum_{i=1}^{n}\frac{c_i(0)\|x(0)\|}{\sqrt{\sum_{k=1}^{n}c_k(0)^2e^{2\|x(0)\|^2(\lambda_k-\lambda_j)t}}}\,e_i.
\label{24}
\en
Suppose now that the eigenvalues of $A$ are non degenerate. Hence we have $n$ different eigenvalues which we label as a decreasing sequence: $\lambda_1>\lambda_2>\cdots>\lambda_n$. Calling $B_{k,j}(t)=e^{2\|x(0)\|^2(\lambda_k-\lambda_j)t}$ we see the following: if $k>j$, $\lambda_k<\lambda_j$ and therefore $B_{k,j}(t)\rightarrow0$ when $t\rightarrow\infty$. If $k=j$, $\lambda_k=\lambda_j$ and therefore $B_{k,j}(t)=1$ for all $t$. Moreover, if $k<j$, $\lambda_k>\lambda_j$ and therefore $B_{k,j}(t)\rightarrow\infty$ when $t\rightarrow\infty$. With this in mind, and assuming for concreteness that $c_1(0)>0$, it follows that
$$
x(t)\rightarrow \|x(0)\|e_1, 
$$
when $t\rightarrow\infty$: the solution of the differential equation (\ref{21}) converges to the (non-normalized, in general) eigenvector of $A$ corresponding to its highest eigenvalue. This is true, of course, if $c_1(0)$ in
 (\ref{23}) and (\ref{24}) is different from zero. If we rather take, as initial vector $x(0)$, a vector with $c_1(0)=0$ and $c_2(0)\neq0$, $x(0)=\sum_{i=2}^{n}c_i(0)e_i$, we can check that
$$
x(t)\rightarrow \|x(0)\|e_2, 
$$
at least if $c_2(0)>0$. Incidentally, this result also shows that, if we consider a vector $x(0)$ which is orthogonal to $e_1$, then it remains orthogonal to $e_1$ even in the limit $t\rightarrow\infty$. These steps can be repeated and all the eigenvectors of $A$ can be found, in principle

The situation generalizes easily when some eigenvalue is degenerate. Suppose, just to be concrete, that $\lambda_1=\lambda_2>\lambda_3=\lambda_4>\lambda_5>\cdots>\lambda_n$: two eigenvalues are degenerate, and both have degeneracy two. In this case, repeating the previous analysis, we see that
$$
x(t)\rightarrow \|x(0)\|\frac{c_1(0)e_1+c_2(0)e_2}{\sqrt{c_1(0)^2+c_2(0)^2}}, 
$$
at least if both $c_1(0)$ and $c_2(0)$ are not zero. Notice that this vector belongs to the eigenspace of the highest eigenvalue of $A$, $\E_{max}$, as before, even if this eigenspace is now two-dimensional. The two leading eigenvectors can now be fixed by choosing any two orthonormal vectors in $\E_{max}$.

\subsection{Extension to  $A\neq A^\dagger$}\label{sect3}

Our next task is to generalize the above results to the case in which $A\neq A^\dagger$. As we have discussed in the Introduction, this is relevant in connection with pseudo hermitian quantum mechanics or for similar extensions of ordinary quantum mechanics, where the Hamiltonian of the physical system is not required to be Hermitian but still, most of the times, has a real spectrum, \cite{baginbook,benbook}. Also, it can be quite useful in other relevant situations, in which {\em physical} Hamiltonians turn out to have complex eigenvalues, like often happen in quantum optics or in gain-loss systems, see \cite{op1,op2,op3}. In this situation the first difference, with respect with what is discussed in Section \ref{sect2}, is that the eigenvectors of $A$, $\{\varphi_i\}$, are not, in general, mutually orthogonal: $\left<\varphi_i,\varphi_j\right>\neq\delta_{i,j}$. Nevertheless it is well known, \cite{bag2016,chri}, that a biorthogonal basis of $\Hil$ exists, $\{\Psi_i\}$, $\left<\varphi_i,\Psi_j\right>=\delta_{i,j}$, such that 
\bea
A\varphi_i=\lambda_i\varphi_i, \qquad \mbox{and}\qquad A^\dagger\Psi_i=\overline{\lambda_i}\Psi_i,\label{biorto}
\ena
for all $i=1,2,\ldots,n$. Let now introduce two sets of unknown functions $c:=\{c_i(t), \,i=1,2,\ldots,n\}$ and $d:=\{d_i(t), \,i=1,2,\ldots,n\}$, and two related vectors:
\be
x_\varphi(t)=\sum_{i=1}^{n}c_i(t)\varphi_i, \qquad x_\Psi(t)=\sum_{i=1}^{n}d_i(t)\Psi_i,
\label{31}\en
and let us assume  that these functions satisfy two (apparently) different differential equations:
\be
\left\{
\begin{array}{ll}
	\frac{dx_\varphi(t)}{dt}=\left<x_\Psi(t),x_\varphi(t)\right>Ax_\varphi(t)-\left<x_\Psi(t),Ax_\varphi(t)\right>x_\varphi(t)\\	
\frac{dx_\Psi(t)}{dt}=\left<x_\varphi(t),x_\Psi(t)\right>A^\dagger x_\Psi(t)-\left<x_\varphi(t),A^\dagger, x_\Psi(t)\right>x_\Psi(t).
\end{array}
\right.
\label{32}\en 
which look quite similar to two copies of equation \eqref{21}.
This {\em doubling} is a typical effect of going from Hermitian to non-Hermitian operators, \cite{baginbook}: when $A=A^\dagger$, then $\varphi_i=\Psi_i$, $\lambda_i=\overline{\lambda_i}$, for all $i$, and the eigenvectors form an orthonormal basis for $\Hil$.  Stated differently, moving from an Hermitian $A$ to a non-Hermitian one, often produces this kind of features: two Hamiltonians (one the adjoint of the other), two sets of eigenvalues, two families of eigenvectors, and so on, see \cite{baginbagbook}. 

The system in (\ref{32}) is closed and nonlinear. As we will show in a moment, it is possible to find its explicit solution, extending what we have shown for equation (\ref{21}).

It is interesting to notice that, similarly to what we have found in (\ref{add5}), i.e. that $\|x(t)\|=\|x(0)\|$, the two scalar products appearing in equations (\ref{32}) turn out to be time-independent:
\be
\left<x_\Psi(t),x_\varphi(t)\right>=\left<x_\Psi(0),x_\varphi(0)\right>=\sum_{k=1}^n \,\overline{d_k(0)}\, c_k(0)=\left<d,c\right>,
\label{34}\en
where $\left<d,c\right>$ is the usual scalar product of $d=\{d_k(0)\}$ and $c=\{c_k(0)\}$ in $\mathbb{C}^n$. The proof is a replica of that for $\|x(t)\|$:
$$
\frac{d}{dt}\left<x_\Psi(t),x_\varphi(t)\right>=\left<\dot x_\Psi(t),x_\varphi(t)\right>+\left<x_\Psi(t),\dot x_\varphi(t)\right>=0,
$$ 
after inserting (\ref{32}). Hence equations in (\ref{32}) can be rewritten as
\be
\left\{
\begin{array}{ll}
	\frac{dx_\varphi(t)}{dt}=\left<x_\Psi(0),x_\varphi(0)\right>Ax_\varphi(t)-\left<x_\Psi(t),Ax_\varphi(t)\right>x_\varphi(t)\\	
	\frac{dx_\Psi(t)}{dt}=\left<x_\varphi(0),x_\Psi(0)\right>A^\dagger x_\Psi(t)-\left<x_\varphi(t),A^\dagger, x_\Psi(t)\right>x_\Psi(t).
\end{array}
\right.
\label{32bis}\en

We introduce  the following definition:
\begin{defn}\label{def2}
	Two vectors $(\xi_\varphi(t),\xi_\Psi(t))\neq(0,0)$, $\xi_\varphi(t)=\sum_{i=1}^{n}c_i(t)\varphi_i$ and $\xi_\psi(t)=\sum_{i=1}^{n}d_i(t)\Psi_i$, form an equilibrium pair of (\ref{32bis}) if the right-hand sides of (\ref{32bis}) are both zero.
\end{defn}

Next we prove this lemma:

\begin{lemma}\label{lemma3}
	If $(\xi_\varphi(t),\xi_\Psi(t))$ is an equilibrium pair of (\ref{32bis}) then $(\xi_\varphi(t),\xi_\Psi(t))=(\xi_\varphi(0),\xi_\Psi(0))$, and, if $\left<\xi_\Psi(0),\xi_\varphi(0)\right>\neq0$,
	\be
	A\xi_\varphi(0)=\frac{\left<\xi_\Psi(0),A\xi_\varphi(0)\right>}{\left<\xi_\Psi(0),\xi_\varphi(0)\right>}\,\xi_\varphi(0), \qquad A^\dagger\xi_\Psi(0)=\frac{\left<\xi_\varphi(0),A^\dagger\xi_\Psi(0)\right>}{\left<\xi_\varphi(0),\xi_\Psi(0)\right>}\,\xi_\Psi(0). 
\label{33}\en
	Viceversa, if $(\xi_\varphi(t),\xi_\Psi(t))=(\xi_\varphi(0),\xi_\Psi(0))$, with $\left<\xi_\Psi(0),\xi_\varphi(0)\right>\neq0$, and if $(\xi_\varphi(t),\xi_\Psi(t))$ satisfy (\ref{33}), then $(\xi_\varphi(t),\xi_\Psi(t))$ is an equilibrium pair of (\ref{32bis}).
\end{lemma}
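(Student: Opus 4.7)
The plan is to exploit the fact that Definition \ref{def2} requires both right-hand sides of (\ref{32bis}) to vanish identically, so the corresponding time derivatives are zero and the pair $(\xi_\varphi(t),\xi_\Psi(t))$ must be constant. Once this is established, the eigenvalue-type identities in (\ref{33}) will drop out by a purely algebraic manipulation of the vanishing right-hand sides, provided one can divide by the scalar product $\langle\xi_\Psi(0),\xi_\varphi(0)\rangle$.

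For the forward direction I would proceed as follows. From Definition \ref{def2}, the equalities
\[
\left<\xi_\Psi(0),\xi_\varphi(0)\right>A\xi_\varphi(t)-\left<\xi_\Psi(t),A\xi_\varphi(t)\right>\xi_\varphi(t)=0
\]
and the analogous one for $\xi_\Psi(t)$ hold for all $t\geq 0$. Since (\ref{32bis}) asserts $\dot\xi_\varphi(t)$ and $\dot\xi_\Psi(t)$ equal these expressions, we get $\dot\xi_\varphi(t)=\dot\xi_\Psi(t)=0$, so $(\xi_\varphi(t),\xi_\Psi(t))=(\xi_\varphi(0),\xi_\Psi(0))$. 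Setting $t=0$ in the vanishing right-hand sides then yields
\[
\left<\xi_\Psi(0),\xi_\varphi(0)\right>A\xi_\varphi(0)=\left<\xi_\Psi(0),A\xi_\varphi(0)\right>\xi_\varphi(0),
\]
and the twin equation for $A^\dagger\xi_\Psi(0)$. Under the hypothesis $\left<\xi_\Psi(0),\xi_\varphi(0)\right>\neq 0$, dividing by this scalar gives (\ref{33}). (Observe that (\ref{34}) guarantees that the scalar product is in any case constant in $t$, so the hypothesis at $t=0$ is unambiguous.)

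For the converse, the assumptions already say that $(\xi_\varphi(t),\xi_\Psi(t))$ is constant in time, so it remains to verify that (\ref{33}) implies that the right-hand sides of (\ref{32bis}) vanish. Multiplying the first identity in (\ref{33}) by $\left<\xi_\Psi(0),\xi_\varphi(0)\right>\neq 0$ and rearranging produces exactly $\left<\xi_\Psi(0),\xi_\varphi(0)\right>A\xi_\varphi(0)-\left<\xi_\Psi(0),A\xi_\varphi(0)\right>\xi_\varphi(0)=0$, which is the first right-hand side of (\ref{32bis}) evaluated at the constant pair; the same argument applied to the second identity in (\ref{33}) handles the $A^\dagger$-equation.

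There is no real obstacle here beyond bookkeeping: the whole content is the observation that \emph{equilibrium} in the sense of Definition \ref{def2} forces stationarity (so one may freely identify values at $t$ with values at $0$), combined with the non-degeneracy $\left<\xi_\Psi(0),\xi_\varphi(0)\right>\neq 0$, which is only needed to pass from the vanishing right-hand side of (\ref{32bis}) to the Rayleigh-quotient-type formula in (\ref{33}).
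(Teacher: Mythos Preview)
Your proof is correct and follows essentially the same route as the paper's own argument: use Definition \ref{def2} to conclude that the right-hand sides of (\ref{32bis}) vanish, deduce that the pair is constant, and then rearrange the vanishing right-hand side (dividing by the nonzero scalar product) to obtain (\ref{33}); the converse is handled by the same algebraic manipulation in reverse. Your added remark that (\ref{34}) makes the non-degeneracy hypothesis at $t=0$ unambiguous is a small refinement not present in the paper, but the substance of the two proofs is identical.
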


\begin{proof}
	Let us assume first that $(\xi_\varphi(t),\xi_\Psi(t))$ is an equilibrium pair of (\ref{32bis}). Hence the right-hand sides of (\ref{32bis}) are both zero, which implies that $\dot \xi_\varphi(t)=\dot \xi_\Psi(t)=0$. Then 
	$(\xi_\varphi(t),\xi_\Psi(t))=(\xi_\varphi(0),\xi_\Psi(0))$. Now, since for instance $\left<\xi_\Psi(0),\xi_\varphi(0)\right>A\xi_\varphi(t)-\left<\xi_\Psi(t),A\xi_\varphi(t)\right>\xi_\varphi(t)=0$, the first equality in (\ref{33}) easily follows recalling also that $\left<\xi_\Psi(0),\xi_\varphi(0)\right>\neq0$.
	
	Viceversa, if $(\xi_\varphi(t),\xi_\Psi(t))=(\xi_\varphi(0),\xi_\Psi(0))$, and if, for instance, $A\xi_\varphi(0)=\frac{\left<\xi_\Psi(0),A\xi_\varphi(0)\right>}{\left<\xi_\Psi(0),\xi_\varphi(0)\right>}\,\xi_\varphi(0)$, then $\left<\xi_\Psi(0),\xi_\varphi(0)\right>A\xi_\varphi(0)-\left<\xi_\Psi(0),A\xi_\varphi(0)\right>\xi_\varphi(0)=0$. Therefore $$\left<\xi_\Psi(0),\xi_\varphi(0)\right>A\xi_\varphi(t)-\left<\xi_\Psi(t),A\xi_\varphi(t)\right>\xi_\varphi(t)=0$$ as well, as we had to prove

\end{proof}

From formulas in (\ref{33})  we conclude that an equilibrium pair of (\ref{32}) is a set of time-independent eigenvectors of $A$ and $A^\dagger$ respectively, with eigenvalues given in terms of  $\xi_\varphi(t)$ and $\xi_\Psi(t)$, both computed at $t=0$ (or at any $t>0$, being these vectors constant in time).

Using now (\ref{34}), together with the expansions in (\ref{31}), and the biorthogonality of the sets $\{\varphi_i\}$ and $\{\Psi_i\}$, the equations in (\ref{32bis}) can be rewritten as follows:
\be
\left\{
\begin{array}{ll}
	\dot c_k(t)=\left(\left<d,c\right>\lambda_k-\Lambda(t)\right)c_k(t),\\	
	\dot d_k(t)=\left(\left<c,d\right>\overline{\lambda_k}-\overline{\Lambda(t)}\right) d_k(t),\\
\end{array}
\right.
\label{35}\en 
where we have defined
\be
\Lambda(t)=\sum_{k=1}^{n}\lambda_k\,\overline{d_k(t)}\,c_k(t).
\label{36}\en
It is convenient now to introduce two auxiliary functions $p_k(t)=e^{-\left<d,c\right>\lambda_kt}c_k(t)$ and   $q_k(t)=e^{-\left<c,d\right>\overline{\lambda_k}\,t}d_k(t)$. Hence we can rewrite (\ref{35}) as follows:
\be
\left\{
\begin{array}{ll}
	\dot p_k(t)=-\Lambda(t) p_k(t),\\	
	\dot q_k(t)=-\overline{\Lambda(t)} d_k(t).\\
\end{array}
\right.
\label{37}\en 
where now we rewrite $\Lambda(t)$ as follows:
$$
\Lambda(t)=\sum_{k=1}^{n}\lambda_k\,e^{2\left<d,c\right>\lambda_kt}\,\overline{q_k(t)}\,p_k(t).
$$
Next, let us consider the following initial conditions for $c_j(t)$ and $d_j(t)$: $d_j(0)=\overline{c_j(0)}$, for all $j=1,2,\ldots,n$. Notice that this choice automatically implies that $c$ and $d$ are not orthogonal, as required in Lemma \ref{lemma3}. Indeed we have $\left<d,c\right>=\sum_{k=1}^{n}\,(c_k(0))^2>0$ if we restrict, as we will do here, to real and not all zero $c_k(0)$. Hence $q_j(0)=\overline{p_j(0)}$, and the solutions for $p_k(t)$ and $q_k(t)$ are easily related: $q_k(t)=\overline{p_k(t)}$. The system in (\ref{37}) simplifies, giving a single equation
 \be
 	\dot p_k(t)=-\Lambda(t) p_k(t),
 \label{38}\en
 with
 $$
 \Lambda(t)=\sum_{k=1}^{n}\lambda_k\,e^{2\left<d,c\right>\lambda_kt}\,(p_k(t))^2,
 $$
and we also get $\left<d,c\right>=\sum_{k=1}^{n}\,(c_k(0))^2=\sum_{k=1}^{n}\,(p_k(0))^2$ which is strictly positive under our assumptions. From (\ref{38}) we deduce that for all $k$ and $l$ the ratio $\frac{p_k(t)}{p_l(t)}$ is independent of time:
$$
\frac{p_k(t)}{p_l(t)}=\frac{p_k(0)}{p_l(0)},
$$
at least if $p_l(0)=c_l(0)\neq 0$.
Using this property we obtain the following differential equation:
$$
\frac{d}{dt}\frac{1}{p_k^2(t)}=\frac{2}{p_k^2(0)}G(t),
$$
where $G(t)=\sum_{k=1}^{n}\lambda_k\,e^{2\left<d,c\right>\lambda_kt}\,( p_k(0))^2$ so that, after few simple computations and going back to $c_k(t)$, we obtain
$$
c_k(t)=\frac{|c_k(0)|\sqrt{\left<d,c\right>}}{\sqrt{\sum_{l=1}^{n}c_l(0)^2e^{2\left<d,c\right>(\lambda_l-\lambda_k)t}}}.
$$
Finally, from (\ref{31}), we get
\be x_\varphi(t)=\sum_{k=1}^{n}\frac{c_k(0)\sqrt{\chi_c(0)}}{\sqrt{\sum_{l=1}^{n}c_l(0)^2e^{2\chi_c(0)(\lambda_l-\lambda_k)t}}}\,\varphi_k.
\label{39}
\en
Similarly, $x_\Psi(t)=\sum_{k=1}^{n}\overline{c_k(t)}\,\Psi_k$

Now the next steps are almost identical to those for the Hermitian case: for instance, suppose that the eigenvalues of $A$ are non degenerate and that $\Re(\lambda_1)>\Re(\lambda_2)>\cdots>\Re(\lambda_n)$. Here $\Re(z)$ is the real part of $z$. Then, if $c_1(0)>0$, it follows that
$$
x_\varphi(t)\rightarrow \,\sqrt{\chi_c(0)}\varphi_1, \qquad x_\Psi(t)\rightarrow \,\sqrt{\chi_c(0)}\Psi_1,
$$
when $t\rightarrow\infty$: the solutions of the differential equations in (\ref{32}) converge to the (non-normalized, in general) eigenvectors of $A$ and $A^\dagger$ corresponding to the eigenvalue with the highest real part,  called again {\em the dominant eigenvectors}.

Of course, in order to determine the  eigenvectors $\varphi_i$ of $A$ and $\Psi_i$ of $A^\dagger$ related to the other $\lambda_i$, $i>1$,  it would be enough to start with initial conditions for \eqref{32} which are  orthogonal respectively to the eigenvectors $\Psi_1$ and $\varphi_1$, that is 
$\left<x_{\varphi}(0),\Psi_1\right>=\left<x_{\Psi}(0),\varphi_1\right>=0$ or, equivalently, taking  $c_1(0)=d_1(0)=0$. Hence, supposing that for instance $c_2(0)\neq0,d_2(0)\neq0$, we would have
$$
x_\varphi(t)\rightarrow \,\sqrt{\chi_c(0)}\,\varphi_2, \qquad x_\Psi(t)\rightarrow \,\sqrt{\chi_c(0)}\,\Psi_2.
$$
Similarly, by requiring that $c_1(0)=c_2(0),\ldots=c_k(0)=d_1(0)=d_2(0)\ldots=d_k(0)=0,$ we shall have $
x_\varphi(t)\rightarrow \,\sqrt{\chi_c(0)}\,\varphi_{k+1}$ and $x_\Psi(t)\rightarrow \,\sqrt{\chi_c(0)}\,\Psi_{k+1}$. We conclude that all the eigenvectors ca be found, in principle.

\vspace{2mm}

{\bf Remark:--}.
It is much simpler to recover the  eigenvalue with the smallest real part. In fact, it is enough to consider the system of equation \eqref{32bis}, and consequently \eqref{35}, with $A$ and $A^\dagger$ replaced respectively by $-A$ and $-A^\dagger$. In this case the eigenvalues  follow the order $-\Re(\lambda_n)>-\Re(\lambda_{n-1})>\ldots>-\Re(\lambda_1)$; hence the numerical solution in \eqref{39} will converge in the infinite time limit:
$$
x_\varphi(t)\rightarrow \,\sqrt{\chi_c(0)}\,\varphi_n, \qquad x_\Psi(t)\rightarrow \,\sqrt{\chi_c(0)}\,\Psi_n,
$$
corresponding to the lowest (in real part) eigenvalues $\lambda_n$  of $A$ and $A^\dagger$  (highest of $-A$ and $-A^\dagger$). 
\vspace{2mm}

%{\bf Remarks 2:--}.
%We stress that the numerical implementation of the above procedure actually requires the solution of \eqref{32} and not \eqref{34}. This implies that if we start with  initial conditions $x^c_{\varphi}(0),x^c_{\Psi}(0)$ for which 
%$\left<x^c_{\varphi}(0),\Psi_1\right>\neq0$ and $\left<x^d_{\Psi}(0),\varphi_1\right>\neq0$ the solution will converge, as $t\rightarrow\infty$, to two eigenvectors $\tilde{\varphi}_1$ and $\tilde{\Psi}_1$ which will be equal to $\varphi_1$ and $\Psi_1$, respectively, up to two phases and/or two normalization factor that will ensure
% $\left<\tilde{\varphi}_1,\tilde{\Psi}_1\right>=1$. In general we shall find two sets of bi-orthogonal eigenvectors satisfying $\left<\tilde\varphi_i,\tilde\Psi_j\right>=\tau_i\delta_{i,j}$, with $\tau_i\in \mathbb{C}$.
 
% 
%\subsection{Complex eigenvalues with $A\neq A^\dagger$}
%The case in  which the eigenvalues of $A$ are complex can be straightforwardly handled.
%Suppose that the eigenvalues are ordered according their real part from the largest  to the smallest ( we shall denote with $>_r$
%this ordering relationship):
%$\lambda_1>_r\lambda_2>_r\cdots>_r\lambda_n$, then asymptotic solutions
%\eqref{37}-\eqref{38} contains complex entries, and the square roots are meant to be the principal square root.
%Again, if  if $c_1(0)>0$, it follows that
%$$
%x_\varphi^c(t)\rightarrow \,\varphi_1, \qquad x_\Psi^d(t)\rightarrow \,\Psi_1,
%$$
%when $t\rightarrow\infty$, and the solutions of \ref{34} will converge to the complex eigenfunctions of $A$ and $A^\dagger$.
%

\section{A fixed point strategy}\label{sec:fixedpoint}

In this section we consider a different approach for finding the eigensystem of a given matrix $A$, not necessarily self-adjoint, based on an iterative procedure. The proof of the convergence of the procedure is given introducing a suitable contraction map and proving that its unique fixed point is indeed one of the eigenvectors of $A$. Our strategy is based on some results discussed in \cite{rev,stak,schae}, which we extend here to cover the case of possibly non real eigenvalues of $A$.

The main idea of our fixed point approach is based on the following simple considerations: assume the $N\times N$ matrix $A$ admits $N$, possibly not all different, (complex) eigenvalues $\lambda_j$. In particular, we assume that
\be
|\lambda_1|>|\lambda_2|\geq |\lambda_3|\geq \cdots \geq |\lambda_N|.
\label{40}\en
This means, in particular, that $\lambda_1$ has multiplicity one. Using the same notation as in the previous section, this is called the { dominant eigenvalue}, and the related eigenvalue $u_1$ is the { dominant eigenstate}. The other eigenvalues can be degenerate, but still we can construct suitable linear combinations such that the set of eigenvectors of $A$, $\F_u=\{u_j\}$, $Au_j=\lambda_ju_j$, $j=1,2,\ldots,N$, is a basis for $\Hil=\mathbb{C}^N$. Of course, $\F_u$ is not, in general, an o.n. basis. However, as in Section \ref{sect3}, $\F_u$ admits a unique biorthogonal set $\F_v=\{v_j\}$, $\left<v_k,u_j\right>=\delta_{k,j}$, which is automatically a set of eigenstates of $A^\dagger$: $A^\dagger v_j=\overline{\lambda_j}\,v_j$, for all $j$.

Let now $x_0\in\Hil$, $\|x_0\|=1$, be such that $\left<v_1,x_0\right>\neq0$. Then $x_0=\sum_{i=1}^{N}\alpha_iu_i$, $\alpha_i=\left<v_i,x_0\right>$. In particular, $\alpha_1\neq0$. Now, it is clear that
\be
A^kx_0=\lambda_1^k\left(\alpha_1u_1+\sum_{i=2}^{N}\alpha_i\left(\frac{\lambda_i}{\lambda_1}\right)^ku_i\right),
\label{main40}
\en
for all $k=0,1,2,3,\ldots$. Since $\lambda_1$ is the dominant eigenvalue, $\left|\frac{\lambda_i}{\lambda_1}\right|<1$ for all $i=2,3,4,\ldots$, and the $k$-th power of this ratio goes to zero when $k\rightarrow\infty$. Of course, the larger the difference between $|\lambda_1|$ and $|\lambda_2|$, the faster the convergence of the sum $\sum_{i=2}^{N}\alpha_i\left(\frac{\lambda_i}{\lambda_1}\right)^ku_i$. If we further consider $x_k=\frac{A^kx_0}{\|A^kx_0\|}$, we deduce that, a part from corrections of orders $O\left(\left|\frac{\lambda_2}{\lambda_1}\right|^k\right)$,
$$
x_k\simeq \left(\frac{\lambda_1}{|\lambda_1|}\right)^k\frac{\alpha_1}{|\alpha_1|} \,u_1.
$$
Of course, this sequence converges if $\lambda_1>0$ . If $\lambda_1<0$, $\{x_k\}$ oscillates between two opposite vectors, both proportional to $u_1$, the dominant eigenvector. Of course, the sequence  $\{x_k\}$ {\em oscillates even more} if $\lambda_1$ is complex. It may be worth noticing that, in this procedure it is essential that $\alpha_1\neq0$. If we call $x$ the limit of $\left(\frac{|\lambda_1|}{\lambda_1}\right)^k\frac{|\alpha_1|}{\alpha_1}x_k$, $\lambda_1$ can be found as usual: $\lambda_1=\frac{\left<x,Ax\right>}{\left<x,x\right>}$.

{In complete analogy,  writing  $y_0=\sum_{i=1}^{N}\beta_iv_i$, $\beta_i=\left<u_i,y_0\right>$ the initial vector with $\beta_1\neq0$, we retrieve
\bea
y_k=\frac{\left(A^\dag\right)^ky_0}{\|\left(A^\dag\right)^ky_0\|}\simeq \left(\frac{\bar\lambda_1}{|\bar\lambda_1|}\right)^k\frac{\beta_1}{|\beta_1|} \,v_1,\label{secv1}
\ena which converge to an eigenvector proportional to $v_1$.}
\vspace{2mm}

{\bf Remarks:--} (1) Numerical implementation of this strategy clearly shows the effect of the sign of $\lambda_1$ which is positive when the sequence $\{x_k\}$ converges, while is negative when $\{x_k\}$ oscillates, for large $k$, between two opposite vectors, $x$ and $-x$.

(2) The same procedure can be easily extended to the case of a $d$-degenerate $\lambda_1$, $d>1$. In this case the sequence converges to an element of the $d$-dimensional subspace corresponding to the eigenvalue $\lambda_1$.

\vspace{2mm}

%{\color{green}The above strategy, which will be made rigorous soon, can be extended to find more eigenvectors others than the dominant one, $u_1$, for instance if $\F_u=\F_v$ is an o.n. basis. In fact, once $u_1$ has been deduced, we consider a second {\em trial vector} $y_0$ which is orthogonal to $u_1$: $\left<y_0,u_1\right>=0$. Of course, $y_0=\sum_{i=2}^{N}\beta_iu_i$, $\beta_i=\left<y_0,u_i\right>$, since $\beta_1=0$. Hence, repeating the same steps as above, we deduce that
%$$
%y_k=\frac{A^ky_0}{\|A^ky_0\|}\simeq \left(\frac{\lambda_2}{|\lambda_2|}\right)^k\frac{\beta_2}{|\beta_2|} \,u_2,
%$$
%so that all our previous considerations can be repeated. In this way, we have constructed a second sequence converging (if $\lambda_2>0$) to a vector which is proportional to the second eigenvector of $A$. If $\lambda_2<0$ our sequence oscillates between two {\em extreme} vectors, each of which can be taken as the {\em second dominant} eigenvector of $A$. The procedure can be iterated until all the eigenvalues are found.}
%

{The above strategy, which will be made rigorous soon, can be extended to find more eigenvectors others than the dominant ones, $u_1$ and $v_1$, by making use of the biorhonormal sets $\F_u$ and $\F_v$. In fact, once $u_1,v_1$ have been deduced, we consider a new {\em trial vector}, which we call again $x_0$, which is orthogonal to $v_1$: $\left<x_0,v_1\right>=0$. This implies that $x_0=\sum_{i=2}^{N}\alpha_i u_i$, $\alpha_i=\left<v_i,x_0\right>$. Hence, repeating the same steps as before, we deduce that
$$
x_k=\frac{A^kx_0}{\|A^kx_0\|}\simeq \left(\frac{\lambda_2}{|\lambda_2|}\right)^k\frac{\alpha_2}{|\alpha_2|} \,u_2,
$$
and all our previous considerations can be repeated. In particular, again we conclude that $x_k$ converges up to a phase to the second dominant eigenvector $u_2$.
 The procedure can be continued by finding $v_2$ as we did for $v_1$ in \eqref{secv1}, and  iterated more for all the eigenvalues and eigenvectors.
}
\subsection{The contraction}

Let us now define a map $T=\frac{1}{\lambda_1}\,A$, and let us fix a (normalized) $x_0\in\Hil$. We  define the following set:
$$
\C_{x_0}=\left\{f\in\Hil;\quad \left<v_1,f\right>=\left<v_1,x_0\right> \right\}.
$$

Of course, since both $\lambda_1$ and $v_1$ are unknown, when we start our procedure, both $T$ and $\C_{x_0}$ cannot be explicitly identified. However, as we will show in the rest of this section, they are useful tools to prove the convergence of the power method also in presence of complex eigenvalues.
From its definition we see that all the vectors in $\C_{x_0}$ have the same projections on $v_1$. It is clear that $T$ maps $\C_{x_0}$ into $\C_{x_0}$. In fact, let $f\in\C_{x_0}$. Then, expanding $f=\sum_{j=1}^{N}\left<v_j,f\right>u_j$, we have
$$
Tf=\frac{1}{\lambda_1}\sum_{j=1}^{N}\left<v_j,f\right>Au_j=\frac{1}{\lambda_1}\sum_{j=1}^{N}\left<v_j,f\right>\lambda_ju_j=\left<v_1,f\right>u_1+\frac{1}{\lambda_1}\sum_{j=2}^{N}\left<v_j,f\right>\lambda_ju_j,
$$ 
so that $\left<v_1,Tf\right>=\left<v_1,f\right>=\left<v_1,x_0\right>$.

Because of the fact that $A\neq A^\dagger$, the approach used in \cite{stak} does not work. This is because, in general, $\F_u$ is not an o.n. basis. Hence, if $f=\sum_{j=1}^Nf_ju_j$, $\|f\|^2\neq \sum_{j=1}^{N}|f_j|^2$. For this reason, we introduce a new norm $\|.\|_v$ in $\Hil$, which is more convenient for us. Notice however that, due to the fact that $\Hil$ is finite-dimensional, $\|.\|_v$ is equivalent to the standard norm $\|.\|$ of $\Hil$, $\|f\|=\sqrt{\left<f,f\right>}$. We put
\be
\|f\|_v=\sup_{j}|\left<v_j,f\right>|.
\label{41}\en
The fact that this is a norm is clear. In particular, $\|f\|_v=0$ if and only if $\left<v_j,f\right>=0$ for all $j$, which implies that $f=0$, due to the fact that $\F_v$, being a basis, is complete in $\Hil$. Since $\|.\|_v$ and $\|.\|$ are equivalent, and since $\C_{x_0}$ is a closed subspace of a complete set, $\C_{x_0}$ is also complete.  Our main result is contained in the following proposition:

\begin{prop}\label{prop::contraction}
$T$ is a contraction on $\C_{x_0}$. Hence it admits an unique fixed point $y_f\in\C_{x_0}$, $y_f=\left<v_1,x_0\right>u_1$.
\end{prop}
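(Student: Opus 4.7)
The plan is to verify the contraction property directly by exploiting the biorthogonal expansion and then invoke the Banach fixed point theorem on the complete metric space $(\C_{x_0},\|\cdot\|_v)$.

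First I would take two elements $f,g\in\C_{x_0}$ and observe that, by definition of $\C_{x_0}$, $\langle v_1,f-g\rangle=0$. Expanding $f-g$ in the basis $\F_u$ I therefore get $f-g=\sum_{j=2}^{N}\langle v_j,f-g\rangle u_j$, so that
$$
T(f-g)=\frac{1}{\lambda_1}A(f-g)=\sum_{j=2}^{N}\frac{\lambda_j}{\lambda_1}\,\langle v_j,f-g\rangle\,u_j.
$$
Applying biorthogonality once more, $\langle v_1,T(f-g)\rangle=0$ and $\langle v_k,T(f-g)\rangle=\frac{\lambda_k}{\lambda_1}\langle v_k,f-g\rangle$ for $k\geq 2$.

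Next I would use this to estimate the $\|\cdot\|_v$-norm of $Tf-Tg$. Set $\rho=|\lambda_2/\lambda_1|$; by the strict domination assumption \eqref{40} we have $0\leq\rho<1$ and $|\lambda_k/\lambda_1|\leq\rho$ for every $k\geq 2$. Therefore
$$
\|Tf-Tg\|_v=\sup_{k\geq 2}\left|\frac{\lambda_k}{\lambda_1}\right|\,|\langle v_k,f-g\rangle|\leq\rho\,\sup_{k\geq 2}|\langle v_k,f-g\rangle|\leq\rho\,\|f-g\|_v,
$$
which is precisely the contraction estimate. Combined with the facts already noted in the excerpt, namely that $T$ maps $\C_{x_0}$ into itself and that $\C_{x_0}$ is complete with respect to $\|\cdot\|_v$ (equivalence of norms plus closedness), the Banach fixed point theorem yields a unique $y_f\in\C_{x_0}$ with $Ty_f=y_f$.

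Finally I would identify $y_f$ explicitly. The fixed-point equation $Ty_f=y_f$ reads $Ay_f=\lambda_1 y_f$, so $y_f$ lies in the (one-dimensional) $\lambda_1$-eigenspace and must equal $cu_1$ for some $c\in\mathbb{C}$. Imposing the membership condition $\langle v_1,y_f\rangle=\langle v_1,x_0\rangle$ and using $\langle v_1,u_1\rangle=1$ gives $c=\langle v_1,x_0\rangle$, hence $y_f=\langle v_1,x_0\rangle u_1$, as claimed. The only genuinely delicate point, I expect, is the bookkeeping in the first step that guarantees the $j=1$ term is absent from the expansion of $f-g$; everything else is a clean application of biorthogonality and the choice of norm $\|\cdot\|_v$, which is tailor-made so that $T$ acts diagonally with respect to the computable coefficients $\langle v_k,\cdot\rangle$.
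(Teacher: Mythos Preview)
Your argument is correct and follows essentially the same route as the paper: expand the difference of two elements of $\C_{x_0}$ in the biorthogonal basis, note that the $j=1$ coefficient vanishes, and use the tailor-made norm $\|\cdot\|_v$ to read off the contraction constant $\rho=|\lambda_2/\lambda_1|$. The only cosmetic difference is in the identification of the fixed point: the paper simply checks directly that $\langle v_1,x_0\rangle u_1$ is fixed by $T$ and lets uniqueness do the rest, whereas you deduce from $Ty_f=y_f$ that $y_f$ must lie in the $\lambda_1$-eigenspace and then pin down the scalar via the membership constraint---both are equally short and valid.
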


\begin{proof}
 Let $x,y\in\C_{x_0}$: $x=\sum_{j=1}^{N}\alpha_ju_j$,  $y=\sum_{j=1}^{N}\beta_ju_j$, $\alpha_j=\left<v_j,x\right>$ and $\beta_j=\left<v_j,y\right>$, with $\alpha_1=\beta_1$. Hence
 $$
 x-y=\sum_{j=2}^{N}(\alpha_j-\beta_j)u_j, \qquad Tx-Ty=\sum_{j=2}^{N}(\alpha_j-\beta_j)\left(\frac{\lambda_j}{\lambda_1}\right)u_j.
 $$
 Now, using (\ref{41}),
 $$
 \|x-y\|_v=\sup_{j}|\alpha_j-\beta_j|, \qquad \|Tx-Ty\|_v=\sup_{j}|\alpha_j-\beta_j|\left|\frac{\lambda_j}{\lambda_1}\right|\leq \left|\frac{\lambda_2}{\lambda_1}\right|\sup_{j}|\alpha_j-\beta_j|,
 $$
 so that, calling $\rho=\left|\frac{\lambda_2}{\lambda_1}\right|$, $\|Tx-Ty\|_v\leq \rho \|x-y\|_v$. Notice that (\ref{40}) implies that $\rho<1$. Hence $T$ is a contraction.
 
 The fact that $y_f=\left<v_1,x_0\right>u_1$ is a fixed point follows from a direct computation:
 $$
 Ty_f=\frac{1}{\lambda_1}\left<v_1,x_0\right>Au_1=\frac{1}{\lambda_1}\left<v_1,x_0\right>\lambda_1u_1=\left<v_1,x_0\right>u_1=y_f.
 $$

\end{proof}

\vspace{2mm}

{\bf Remarks:--} (1) It is clear that other fixed points of $T$ also exist: $\gamma u_1$, for all complex $\gamma$. This is because $T$ is a linear map. However, in $\C_{x_0}$, normalization of the fixed point cannot be changed, because of the condition $\left<v_1,f\right>=\left<v_1,x_0\right>$. This makes the fixed point of $T$ in $\C_{x_0}$ unique.

(2) It is interesting to extend this result to infinite-dimensional Hilbert spaces. This can be useful for possible applications to quantum mechanical systems living in $\Lc^2(\mathbb{R})$, the space of the square-integrable functions on $\mathbb{R}$, rather than only to elements of $\mathbb{C}^n$. But, as always when going from finite to infinite-dimensional vector spaces, mathematics is much more complicated. This is work in progress.

\vspace{2mm}

Once the fixed point $y_f$ is found, the related (dominant) eigenvalue is easily deduced:
\be
\frac{\left<y_f,Ay_f\right>}{\left<y_f,y_f\right>}=\frac{\left<u_1,Au_1\right>}{\left<u_1,u_1\right>}=\lambda_1,
\label{42}
\en as expected.

This fixed point strategy can be slightly modified to deduce more eigenvalues and eigenvectors other than the dominant ones. We will now briefly sketch what is known as the {\em shifted inverse power} method, modified to take into account the fact that, in our particular case, $A\neq A^\dagger$.

Let $q\in\mathbb{C}$ be a fixed complex number and suppose $q\neq\lambda_j$, for all $j$. This implies that $B_q=(A-q\1)^{-1}$ exists. Also, under our assumptions, $u_j$ is an eigenstate of $B_q$ with eigenvalue $\mu_j=(\lambda_j-q)^{-1}$: $B_qu_j=\mu_ju_j$. Let us call $j_0$ the value of the integer such that $|\mu_{j_0}|>|\mu_j|$, for all $j\neq j_0$. We now introduce the map $D=\frac{B_q}{\mu_{j_0}}$, and repeat for $D$ what we have done for $T$ before. In particular, it is possible to prove that $D$ has an unique fixed point $Y_f=\left<v_{j_0},x_0\right>u_{j_0}$ in the space $\C_{x_0}^{j_0}=\left\{f\in\Hil;\quad \left<v_{j_0},f\right>=\left<v_{j_0},x_0\right> \right\}$: $DY_f=Y_f$. Then, since
$$
\frac{\left<Y_f,DY_f\right>}{\left<Y_f,Y_f\right>}=1,
$$
we deduce that
$$
\frac{\left<Y_f,B_q^{-1}Y_f\right>}{\left<Y_f,Y_f\right>}=\mu_{j_0}=\frac{1}{\lambda_{j_0}-q},
$$
which gives back the value of $\lambda_{j_0}$ after few simple computations.

\subsection{The Schwartz Quotient}\label{sec::schwartz}
Proposition \ref{prop::contraction} in the previous section states that $T$ is a contraction on $\C_{x_0}$ and, as such, because of the properties of $\C_{x_0}$, it admits a unique fixed point. It is  worth stressing (once more!) that $T$ is defined by working as if the dominant eigenvalues $\lambda_1$ was known, which is not the case, of course: $\lambda_1$ will be computed only at the end of our numerical implementation. To avoid this apparent paradox, it can be useful to define a new (iterative)-map which  we shall prove  converges to $T$.
First of all we define the so called \textit{ Schwartz quotients},
\begin{equation}
\gamma_m=\frac{\left< \left(A^\dag\right)^{m+1} x_{\psi}, A^m x_{\varphi}\right>}{\left< \left(A^\dag\right)^{m} x_{\psi}, A^m x_{\varphi},\right>},\label{squot}
\end{equation}
where $x_\varphi=\sum_{j=1}^{N}\alpha_j u_j$ and $x_\psi=\sum_{j=1}^{N}\beta_jv_j$ are two initial random vectors with
$\alpha_j=\left<v_j,x_\varphi\right>$ and $\beta_j=\left<u_j,x_\psi\right>$.

It is clear that $\gamma_m$ is an approximation of the dominant eigenvalue $\lambda_1$ and it is expected to converge to it for $m\rightarrow\infty$. In fact, making use of the bi-orthogonality conditions of the eigenvectors of $A$ and $A^\dag$, we have that
\beano
\gamma_m=\lambda_1
\frac{\alpha_1\bar{\beta}_1+\sum_{j=2}^N\alpha_1\bar{\beta}_j\left(\frac{\lambda_j}{\lambda_1}\right)^{2m+1}}{
\alpha_1\bar{\beta}_1+\sum_{j=2}^N\alpha_1\bar{\beta}_j\left(\frac{\lambda_j}{\lambda_1}\right)^{2m}}=&&\\
\lambda_1\left(\frac{\alpha_1\bar{\beta}_1}{
\alpha_1\bar{\beta}_1+\sum_{j=2}^N\alpha_j\bar{\beta}_j\left(\frac{\lambda_j}{\lambda_1}\right)^{2m}}+
\frac{\sum_{j=2}^N\alpha_j\bar{\beta}_j\left(\frac{\lambda_j}{\lambda_1}\right)^{2m+1}}{
\alpha_1\bar{\beta}_1+\sum_{j=2}^N\alpha_j\bar{\beta}_j\left(\frac{\lambda_j}{\lambda_1}\right)^{2m}}\right)=&&\\
\lambda_1\left(\frac{1}{1+O\left(\frac{\lambda_2}{\lambda_1}\right)^{2m}}\right)+ O\left(\lambda_2\alpha_2\bar{\beta}_2\left(\frac{\lambda_2}{\lambda_1}\right)^{2m}\right).&&
\enano
Since $\lambda_1$ is the dominant eigenvalue, then $\left|\frac{\lambda_2}{\lambda_1}\right|<1$, and the $m$-th power of the ratio $\frac{\lambda_2}{\lambda_1}$ goes to zero when $m\rightarrow\infty$. Hence $\gamma_m\rightarrow\lambda_1$ for $m\rightarrow\infty$.

Then we define the map $\mathcal{S}_m=\frac{A}{\gamma_m} $ on 
$$
\C_{x_\varphi}=\left\{f\in\Hil;\quad \left<v_1,f\right>=\left<v_1,x_\varphi\right> \right\}.
$$
and we prove that  $\mathcal{S}_m$ can be used to define a sequence of vectors converging to the fixed point of $T$. This is not surprising since, recalling that $\gamma_m\rightarrow\lambda_1$, it is reasonable that $S_m\frac{A}{\gamma_m}\rightarrow\frac{A}{\lambda_1}$ when $m\rightarrow\infty$. To be more rigorous, we observe that 
$$
T^mx_\varphi=\alpha_1u_1+\sum_{j=2}^{N}\alpha_j\left(\frac{\lambda_j}{\lambda_1}\right)^{m}u_j,
$$
and
$$
S_mx_{m-1}=
\sum_{j=1}^{N}\alpha_j\left(\frac{\lambda_j^m}{\prod_{k=1}^m\gamma_k}\right)u_j,\\
$$
where $x_1=S_1x_\varphi,\ldots,x_m=S_mx_{m-1}$.
Hence
\beano
T^mx_\varphi-S_mx_{m-1}=\sum_{j=1}^{N}\alpha_j\left[\left(\frac{\lambda_j}{\lambda_1}\right)^m-\frac{\lambda_j^m}{\prod_{k=1}^m\gamma_k}\right]u_j.
\enano
and
\beano
\|T^mx_\varphi-S_mx_{m-1}\|_v=\sup_j\left|\alpha_j\left[\left(\frac{\lambda_j}{\lambda_1}\right)^m-\frac{\lambda_j^m}{\prod_{k=1}^m\gamma_k}\right]\right|.
\enano

Now, as $\gamma_m\approx\lambda_1+ O\left(\frac{\lambda_2}{\lambda_1}\right)^{2m}$,
\beano
\|T^mx_\varphi-S_mx_{m-1}\|_v=\sup_j\left|\alpha_j\left[\left(\frac{\lambda_j}{\lambda_1}\right)^m-\frac{\lambda_j^m}{\lambda_1^m+O\left(\left(\frac{\lambda_2}{\lambda_1}\right)^2\right)\lambda_1^{m-1}}\right]\right|=\\
\sup_j\left|\alpha_j\left[\left(\frac{\lambda_j}{\lambda_1}\right)^m-\frac{\lambda_j^m}{\lambda_1^m\left(1+O\left(\left(\frac{\lambda_2}{\lambda_1}\right)^2\right)\frac{1}{\lambda_1}\right)}\right]\right|\leq\\
\left|\left(\frac{\lambda_2}{\lambda
_1}\right)^m\right|\sup_j\left|\alpha_j\left(1-\frac{1}{1+O\left(\left(\frac{\lambda_2}{\lambda_1}\right)^2\right)\frac{1}{\lambda_1}}\right)\right|,
\enano
so that 
$$
\|T^mx_\varphi-S_mx_{m-1}\|_v\rightarrow0,
$$
for $m\rightarrow\infty$. Now, since $\lim_{m,\infty}T^mx_\varphi=u_1$, we can conclude that
also the sequence $S_mx_{m-1}\rightarrow u_1$ in the same limit.

\section{Numerical results}\label{sect4}

In this section we present some numerical  applications of the strategies proposed in Sections \ref{sect3} and \ref{sec:fixedpoint}.

The dynamical approach analyzed in Section \ref{sect3} requires the numerical solution of the system of ODEs \eqref{32bis}. In order to solve it, we have used 
a multi step variable order Adams-Bashforth-Moulton
time discretization method which  usually requires the solutions at several preceding time
points to compute the current solution, \cite{sg}. Once the solutions  $x_\varphi(t)$ and $x_\Psi(t)$ are computed at a specific time  $t$, the related {\em upgraded} eigenvalues are obtained as a consequence of the Lemma \ref{lemma3}, using the formula 
\be
	\lambda(t)=\frac{\left<x_\Psi(t),Ax_\varphi(t)\right>}{\left<x_\Psi(t),x_\varphi(t)\right>}\qquad \bar{\lambda}(t)=\frac{\left<x_\varphi(t),A^\dagger x_\Psi(t)\right>}{\left<x_\varphi(t),x_\Psi(t)\right>}.
\label{332}\en 
 Evaluation of the solution is then stopped when
\be
\|Ax_\varphi(t)-\lambda x_\varphi(t)\|<\delta_{tol},\quad \|A^\dag x_\Psi(t)-\bar{\lambda}x_\Psi(t)\|<\delta_{tol},
\label{add1}\en
where $\delta_{tol}$ is a small tolerance value. 

The fixed point approach, described in Section \ref{sec:fixedpoint}, makes use of the map $\mathcal{S}_m$ defined in 
Subsection \ref{sec::schwartz}. It  requires two initial guess vectors $x_\varphi=\sum_{j=1}^{N}\alpha_j u_j$ and $x_\psi=\sum_{j=1}^{N}\beta_jv_j$, and at the generic iteration $k$ the eigenvalue approximation ($\lambda_k$) used to define $\mathcal{S}_k$ is given by the Schwartz quotient \eqref{squot}.
Then the upgraded eigenvector of $A$ is given by $x_k=\mathcal{S}_kx_{k-1}$, where $x_0=x_\varphi$. Iteration is stopped when, as in (\ref{add1}),   
\be
\|Ax_k-\lambda_k x_k\|<\delta_{tol}.
\label{add2}\en

\subsection{A test case:  $7\times7$ matrix }\label{sectIV1}
The first numerical experiment (E1), deals with a non hermitian random squared matrix of order 7 of the form
\be
A=R+iT,
\label{add3}\en
where $R$ and $T$ are the random matrices
$$
R=\left(
\begin{array}{ccccccc}
 0.445 & -0.219 & 0.489 & 0.770 & 0.589 & -0.00333 & 0.950 \\
 0.481 & -0.892 & -0.806 & -0.743 & -0.641 & -0.422 & 0.701 \\
 -0.735 & 0.747 & 0.750 & -0.879 & 0.884 & -0.0114 & -0.260 \\
 0.528 & 0.357 & 0.707 & 0.986 & 0.201 & 0.320 & 0.207 \\
 0.899 & 0.727 & 0.206 & -0.792 & 0.109 & 0.895 & 0.672 \\
 -0.400 & -0.259 & -0.988 & 0.459 & 0.681 & 0.843 & 0.788 \\
 0.326 & -0.530 & -0.168 & 0.141 & 0.0158 & -0.496 & -0.907 \\
\end{array}
\right)
$$

$$
T=\left(
\begin{array}{ccccccc}
 0.959 & 0.314 & -0.237 & 0.232 & -0.608 & 0.199 & -0.164 \\
 -0.744 & -0.112 & 0.239 & 0.384 & -0.132 & 0.299 & 0.817 \\
 0.921 & 0.681 & -0.302 & 0.942 & -0.781 & 0.908 & -0.0566 \\
 0.465 & -0.641 & 0.505 & -0.892 & -0.830 & 0.715 & -0.170 \\
 -0.807 & 0.978 & -0.185 & -0.619 & -0.923 & 0.322 & -0.690 \\
 0.0672 & 0.893 & 0.620 & 0.711 & -0.631 & -0.636 & -0.211 \\
 -0.742 & -0.0257 & 0.536 & -0.952 & -0.325 & 0.0701 & 0.196 \\
\end{array}
\right).$$
The eigenvalues of $A$, as deduced by Matlab, for instance, and ordered from those having largest to those with lowest real parts, are 
$\lambda_1=1.5181 - 1.2564i, \lambda_2=0.9604 - 2.2206i,\lambda_3=0.9394 - 0.6078i,\lambda_4=0.8326 + 2.0418i,\lambda_5= -0.7583 - 1.154,\lambda_6= -0.8380 + 0.1978i,\lambda_7=-1.3201 + 1.2896i$. In what follows we will show how these values can be recovered using our strategies.
\subsubsection{Dynamical approach, experiment E1}
Numerical solution of \eqref{32bis} has been carried by fixing the tolerance on the convergence of the solution to
$\delta_{tol}=10^{-8}$.  Starting with a random initial conditions, the obtained converging solutions  $x_{\varphi_1}^\infty$ and $x_{\Psi_1}^\infty$, related to the first eigenvalues $\lambda_1,\bar\lambda_1$, are 
  $$x_{\varphi_1}^\infty=\left( \begin{array}{l}
   0.1451 + 0.0193i\\
  -0.1616 + 0.0317i\\
  -0.2025 - 0.0594i\\
   0.1936 + 0.1459i\\
   0.4491 - 0.2830i\\
   0.6958 - 0.2368i\\
   0.0554 - 0.1155i
   \end{array}\right),\quad 
   x_{\Psi_1}^\infty=\left( \begin{array}{l}
        0.0387 + 0.0163i\\
        0.0072 - 0.0091i\\
       -0.0274 - 0.0008i\\
        0.0256 - 0.0002i\\
        0.0234 + 0.0520i\\
        0.0452 + 0.0044i\\
        0.0317 + 0.01\\
      \end{array}\right).
   $$
   
   The related eigenvalues, obtained trough \eqref{332}, are $\lambda_1=1.5181 - 1.2564i$ and $\bar{\lambda}_1=1.5181 + 1.2564i$. Notice that  $x_{\varphi_1}^\infty$ and $x_{\Psi_1}^\infty$ are not automatically bi-normalized, since $\left<x_{\varphi_1}^\infty,x_{\Psi_1}^\infty\right>=0.0415 + 0.0457i\neq 1$,
   and hence they represents  the bi-orthonormal vectors $\varphi_1$ and $\Psi_1$ in \eqref{biorto}, respectively, only up to some normalizations. 
   In Figure \ref{fig:7x7}(a) the convergence of the absolute values of the components of the dynamical solution $x_{\varphi_1}(t)$  to $x_{\varphi_1}^\infty$ is shown (we find a similar behaviour for convergence to $x_{\Psi_1}^\infty$, not shown  in figure). Real and imaginary parts of the eigenvalue $\lambda_1$ obtained at each time with 
   \eqref{332}, are shown in
   Figure \ref{fig:7x7}(c).
   
   Concerning the determination of the eigenvectors related the eigenvalue $\lambda_7$ having the smallest real part, we have solved \eqref{32bis} by replacing  $A$ and $A^\dagger$ with $-A$ and $-A^\dagger$ respectively.
   The  solutions $x_{\varphi_7}^\infty$ and $x_{\Psi_7}^\infty$  are
     $$x_{\varphi_7}^\infty=\left( \begin{array}{l}
        -0.1751 + 0.1561i\\
        -0.5680 + 0.0590i\\
        -0.0281 + 0.1148i\\
         0.1124 - 0.0007i\\
         0.1513 + 0.4898i\\
        -0.2348 + 0.1173i\\
        -0.2529 - 0.4435
      \end{array}\right),\quad 
      x_{\Psi_7}^\infty=\left( \begin{array}{l}
         -0.00032 + 0.00015\\
          -0.00033 + 0.00089\\
          -0.0005 + 0.00025\\
          -0.00011 + 0.00016i\\
           0.000792 - 0.000228i\\
          -0.00055 + 0.00044i\\
          -0.001379 - 0.00015\\
         \end{array}\right).
      $$
   The corresponding eigenvalues are $\tilde\lambda_7=1.3201 - 1.2896i$ and $\bar{\tilde\lambda}_7=1.3201 + 1.2896i$  which, as expected, are related to the eigenvalues $\lambda_7=-1.3201 + 1.2896i$ and $\bar\lambda_7=-1.3201 - 1.2896i$ of $A$ and $A^\dagger$ by a simple change of sign.
   Again, we see that the solutions we get are  not automatically bi-normalized,  since $\left<x_{\varphi_7}^\infty,x_{\Psi_7}^\infty\right>=0.00095 - 0.0014i\neq1$, so that $x_{\varphi_7}^\infty$ and $x_{\Psi_7}^\infty$ coincide with $\varphi_7$ and $\Psi_7$ only up to some normalization factor. 
   However the bi-orthogonality conditions with the vectors $x_{\varphi_1,\Psi_1}^\infty$ are  satisfied since
   the various scalar products $\left<x_{\varphi_1,\Psi_1}^\infty,x_{\varphi_7,\Psi_7}^\infty\right>$
   are well below the tolerance imposed (in general of order the precision of the machine $10^{-15}$): $\left<x_{\varphi_1,\Psi_1}^\infty,x_{\varphi_7,\Psi_7}^\infty\right>\simeq 0$.
\begin{figure}[!ht]
	\begin{center}		
		\hspace*{-2.0cm}\subfigure[Experiment E1: time evolutions of the components of $|x_{\varphi_1}(t)|$ obtained from \eqref{32}]{\includegraphics[width=8.5cm]{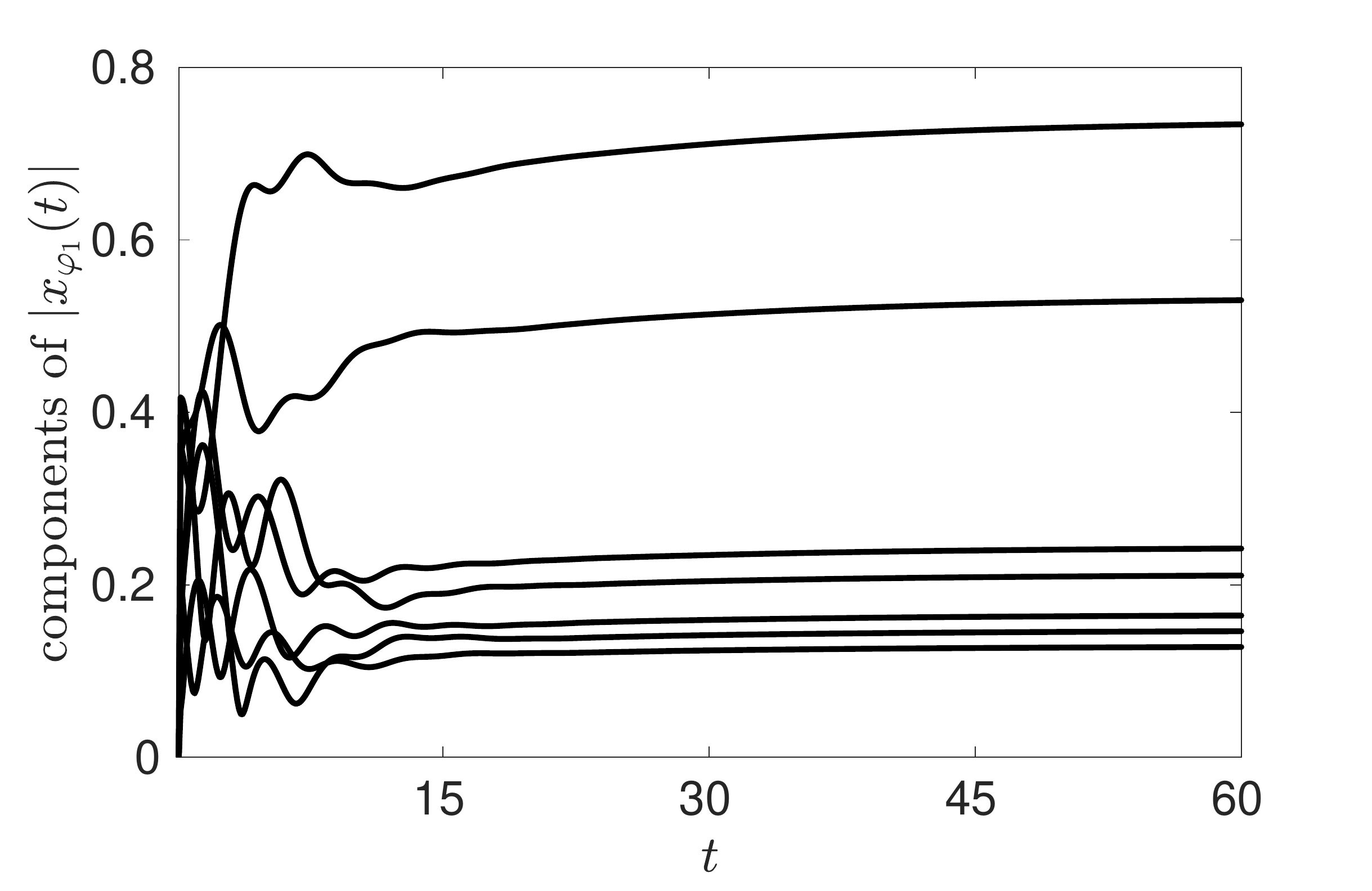}}
		\hspace*{+1.0cm}\subfigure[Experiment E1: time evolutions of the $|x_{\varphi_7}(t)|$ obtained from \eqref{32} with the matrices $-A,-A^\dagger$]{\includegraphics[width=8.5cm]{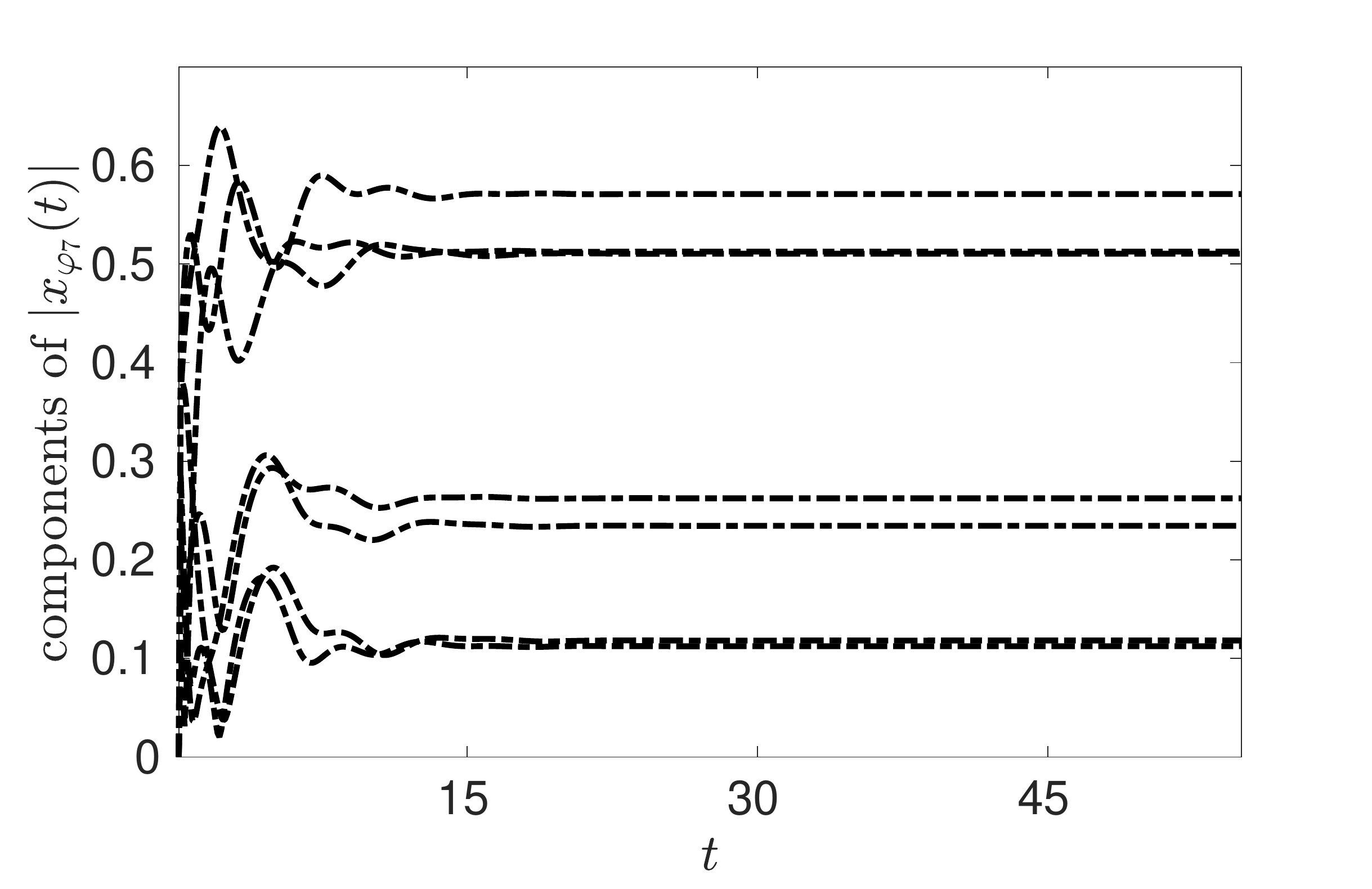}}
			\hspace*{-0.4cm}\subfigure[Experiment E1: time evolutions of the real and imaginary parts of the eigenvalues $\lambda_1$ and $\lambda_7$ of $A$]{\includegraphics[width=8.5cm]{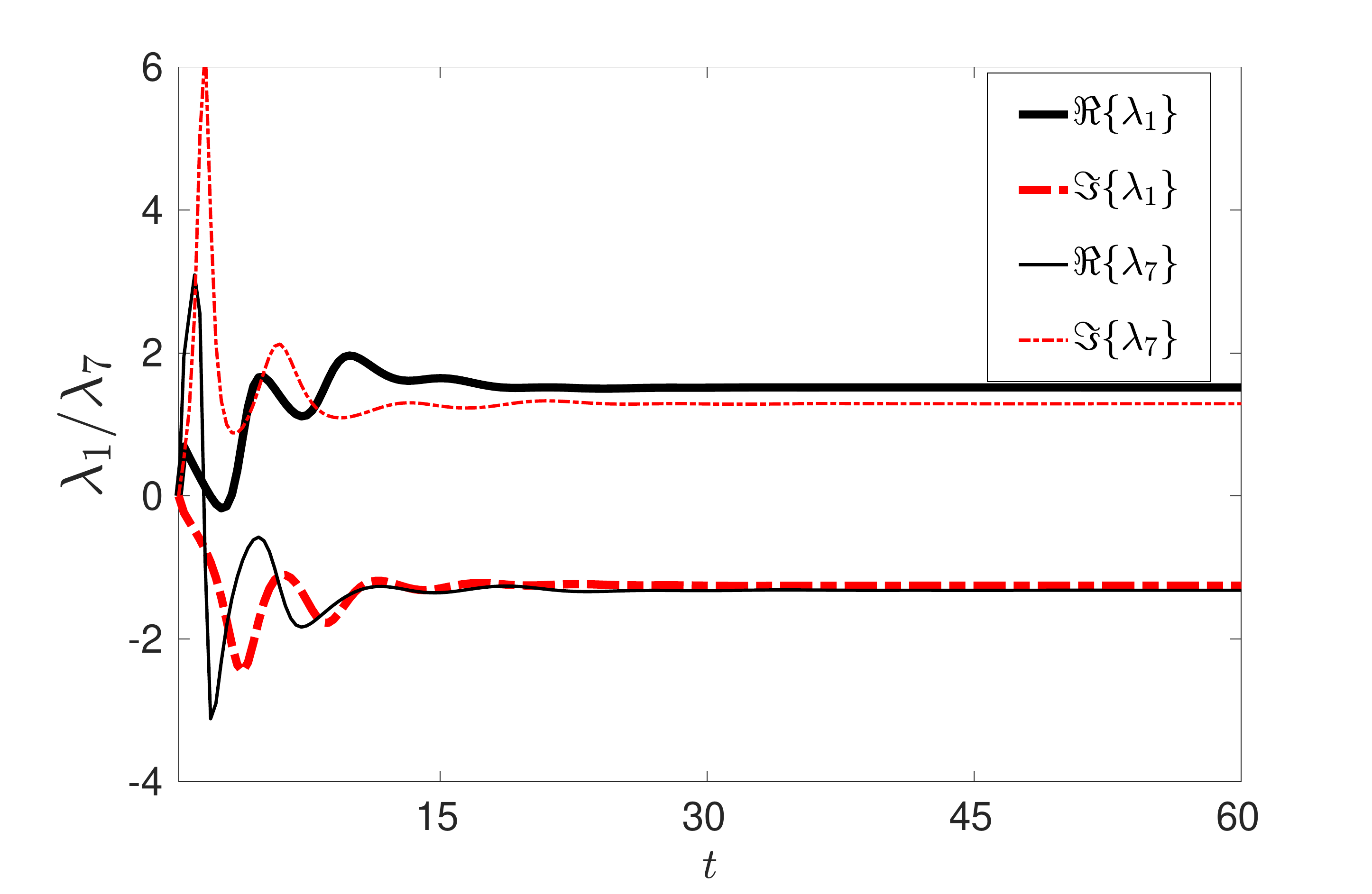}}
		\caption{Numerical results of dynamical approach applied to the experiment E1. }
		\label{fig:7x7}
	\end{center}
\end{figure}
\subsubsection{Fixed point approach, experiment E1}
In this subsection we show the results concerning the fixed point strategy applied to the same $7\times7$ matrix $A$ as in (\ref{add3}). Starting with a random initial condition $x_0$ and a tolerance $\delta_{tol}=10^{-8}$, the first set of iterations converge toward  the vector 
$$\tilde\varphi_2=\left( \begin{array}{l}
   -0.1366 - 0.2000i\\
    0.2704 + 0.0788i\\
   -0.5370 + 0.3358i\\
   -0.1461 - 0.2711i\\
   -0.4839 + 0.0951i\\
    0.2303 - 0.0502i\\
   -0.1436 - 0.2160i
           \end{array}\right).
$$
We call this vector $\tilde\varphi_2$ since it corresponds to the eigenvalue  $\lambda_2=0.9604 - 2.2206i$ of $A$ given above, which is the  largest in norm.
The convergence of the initial guess to the first eigenvector is shown in Figure \ref{fig:7x7fp}(a), where the norm of the components of the vector are shown. Convergence of the real and imaginary parts of the eigenvalue is shown in Figure \ref{fig:7x7fp}(d).
By applying the {shifted inverse power} method, by picking randomly complex values $q$,
we then obtain the other eigenvectors. The first two of them are in sequence 
$$\tilde\varphi_5=\left( \begin{array}{l}
        -0.3639 - 0.0758i\\
          0.3899 - 0.3802i\\
         -0.2303 + 0.3945i\\
          0.2895 + 0.3292i\\
          0.0127 + 0.2107i\\
         -0.2879 - 0.181\\
         -0.0623 + 0.0175i
      \end{array}\right),\quad 
      \tilde\varphi_7=\left( \begin{array}{l}
            0.2049 - 0.1141i\\
            0.5672 + 0.0665i\\
            0.0524 - 0.1059i\\
           -0.1099 - 0.0238i\\
           -0.0407 - 0.5110i\\
            0.2547 - 0.0633i\\
            0.1500 + 0.4880i
         \end{array}\right),
      $$
corresponding to the eigenvalues $\lambda_5$ and $\lambda_7$. Notice that with the shifted inverse power method the obtained sequence of eigenvectors do not follow the norm ordering (in fact $\lambda_5$ and $\lambda_7$ are not the greatest eigenvalues in norm after $\lambda_2$). The convergence  to the  eigenvectors $\tilde\varphi_5$ and $\tilde\varphi_7$ is shown in Figure \ref{fig:7x7fp}(b-c), where the norm of the components of the vectors are shown. Convergence of the real and imaginary parts of the corresponding eigenvalues is shown again in Figure \ref{fig:7x7fp}(d).

	\begin{center}	     
\begin{figure}[!ht]
	
		\hspace*{-2.0cm}\subfigure[Experiment E1: absolute values of the components of the first eigenvector
		of $A$]{\includegraphics[width=8.5cm]{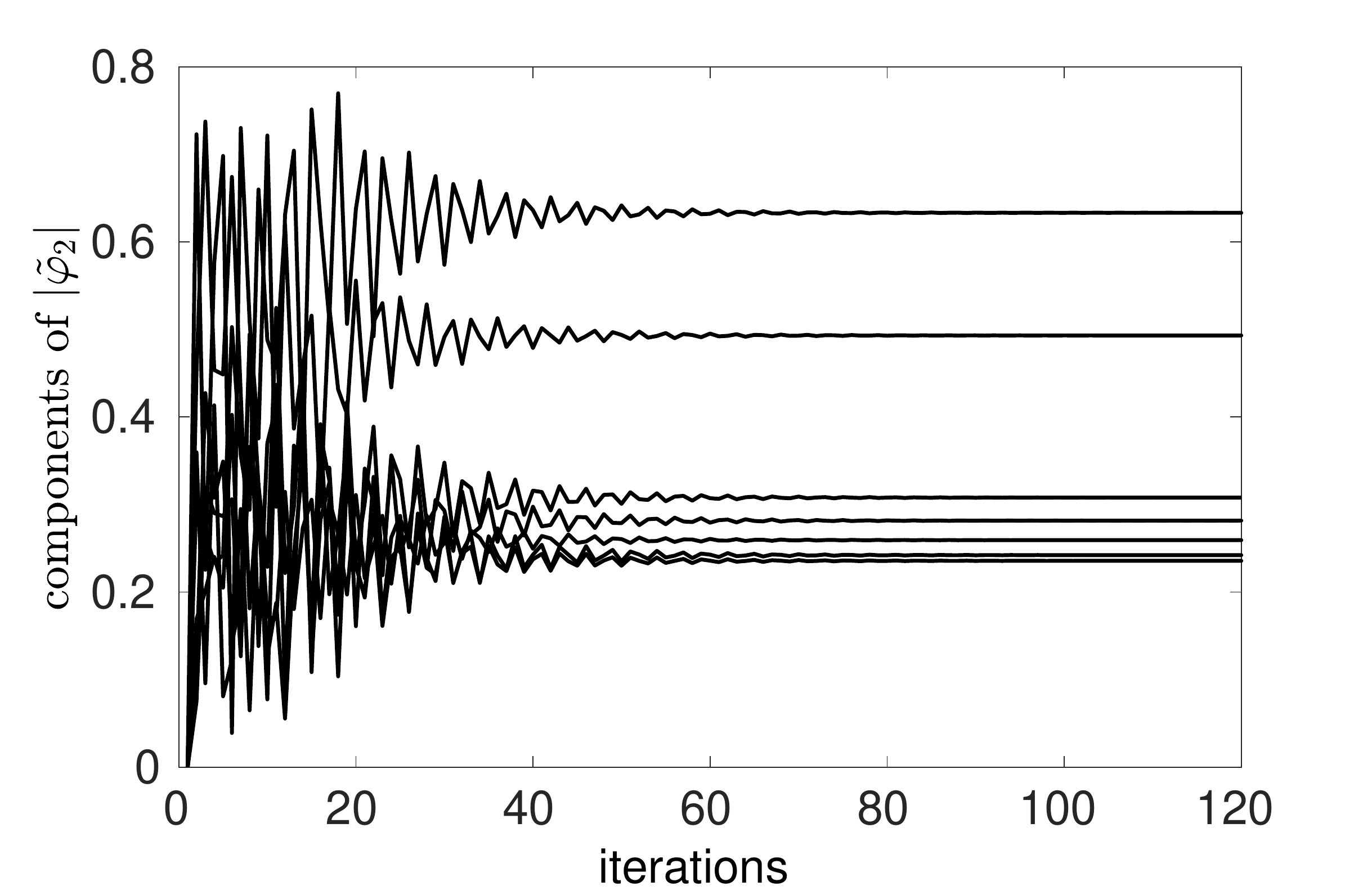}}
		\hspace*{+1.0cm}\subfigure[Experiment E1: absolute values of the components of the second eigenvector
				of $A$]{\includegraphics[width=8.5cm]{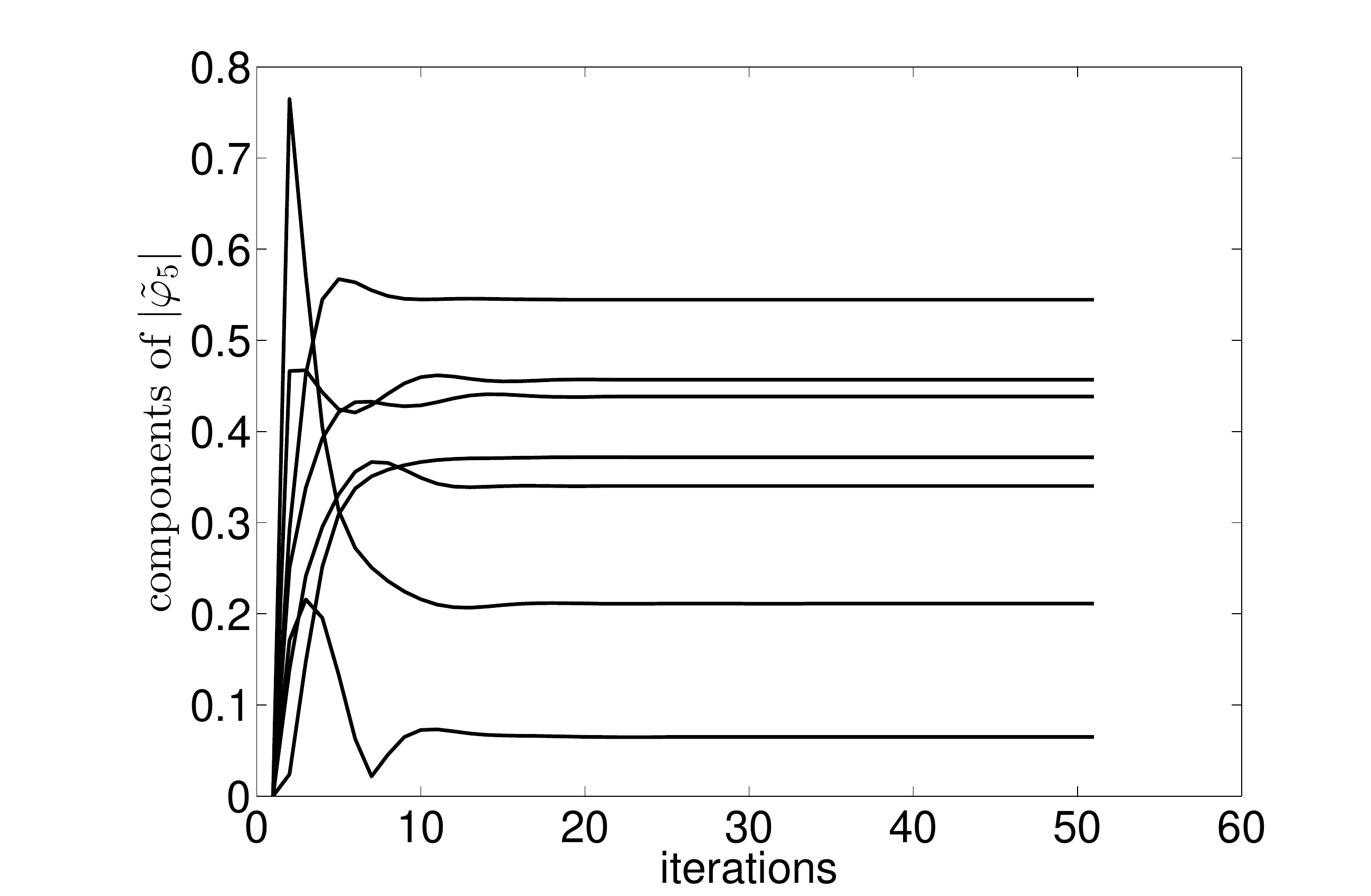}}\\
			\hspace*{-2.0cm}\subfigure[Experiment E1: absolute values of the components of the third eigenvector
							of $A$]{\includegraphics[width=8.5cm]{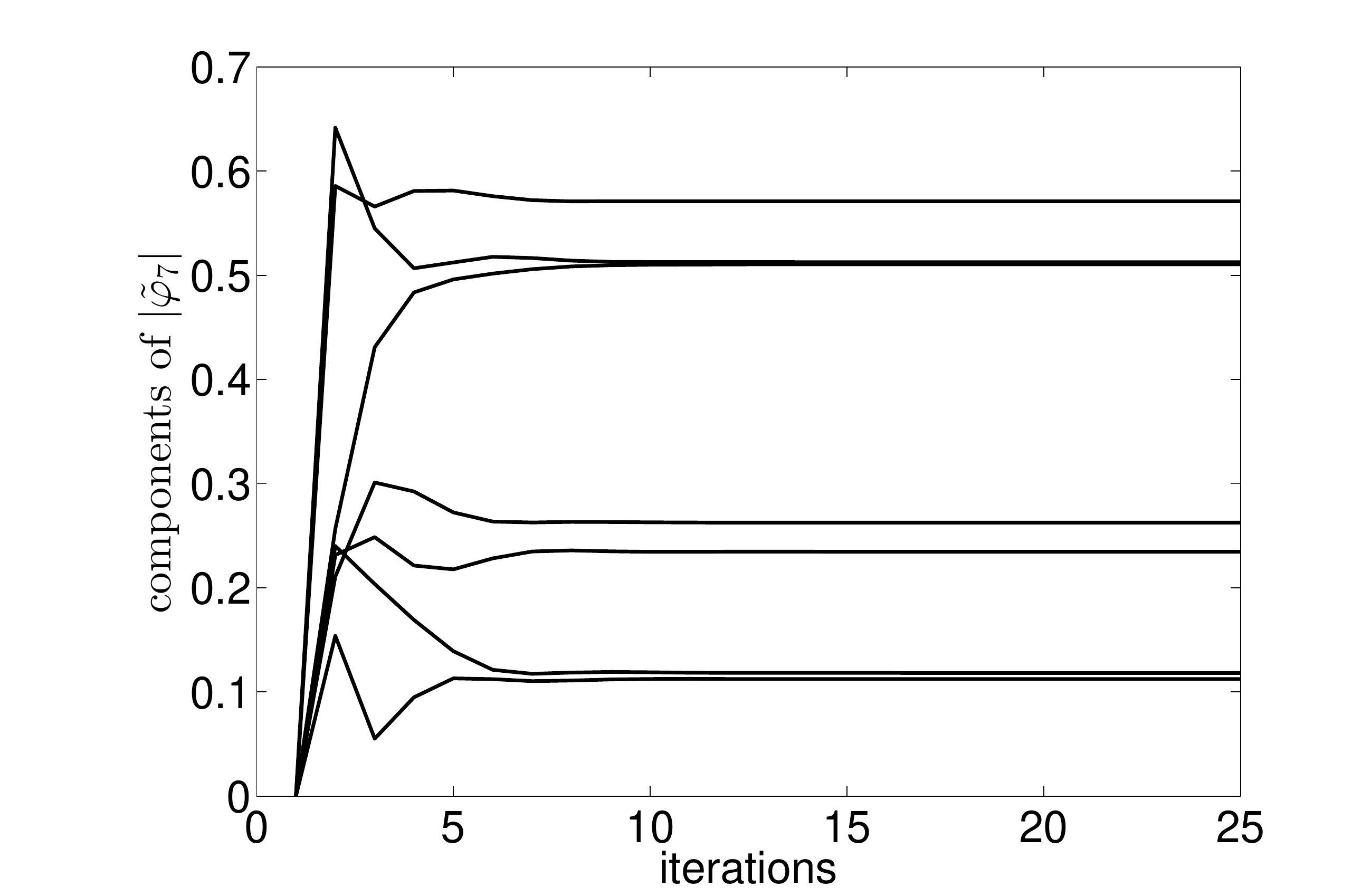}}
			\hspace*{+1cm}\subfigure[Experiment E1:convergence of real and imaginary parts of the three eigenvalues  $\lambda_2,\lambda_5$ and $\lambda_7$]{\includegraphics[width=8.0cm]{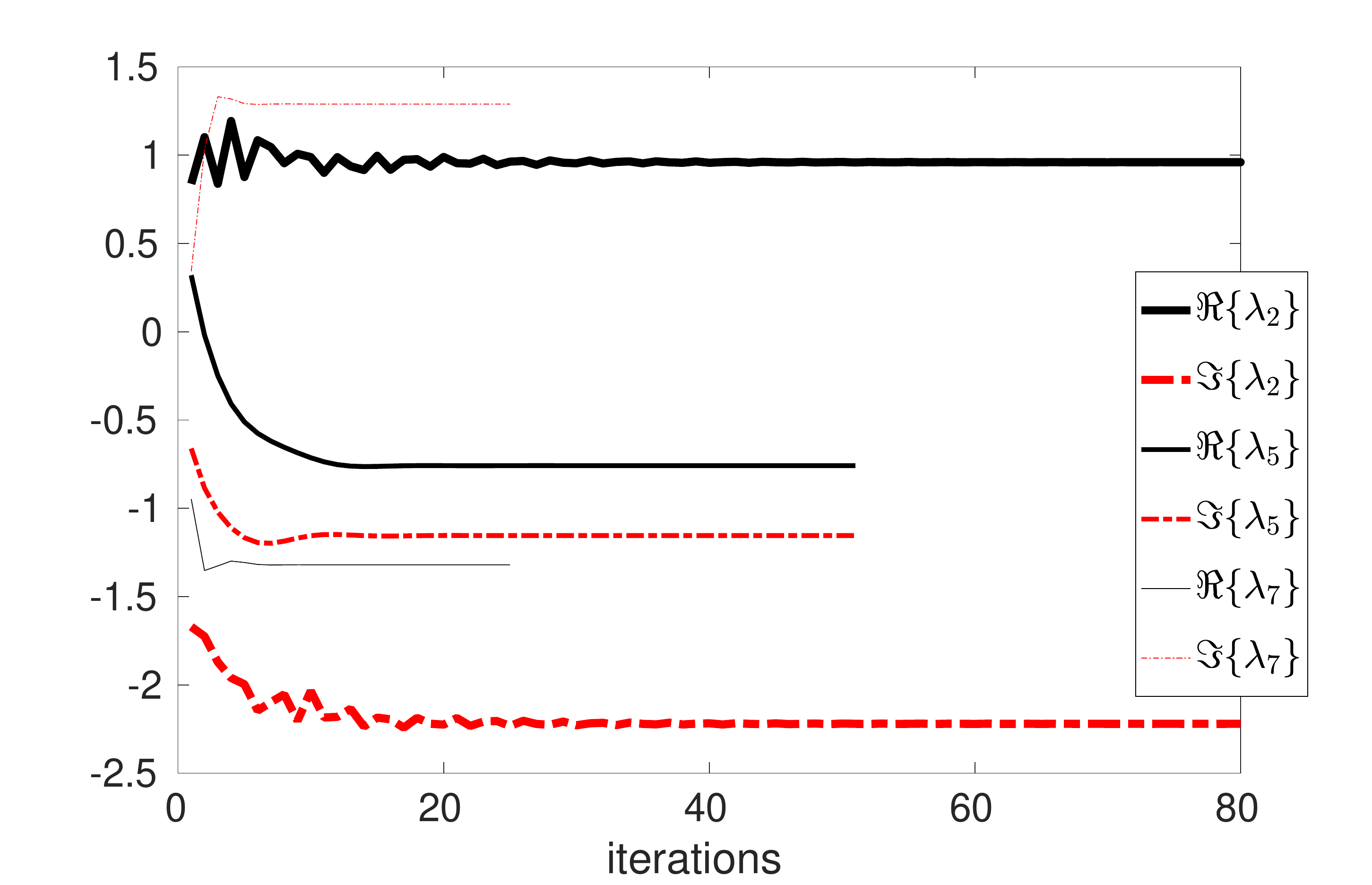}}
		\caption{Numerical results of fixed point approach applied to the experiment E1. }
		\label{fig:7x7fp}
\end{figure}
	\end{center}

\subsection{The Hessenberg matrix}\label{sectIV2}

In this second experiment (E2) we apply our numerical procedures to the finite Hessenberg matrix, an upper matrix with positive subdiagonal. The procedure to construct this kind of matrix is well known and we refer to \cite{Escr,Escr2} for more details.

The finite Hessenberg Matrix of order $n$ we consider is fully defined by a sequence $\{\alpha_k\}_{k\in\mathbb{N}}$ such that $\lim\limits_{k\rightarrow\infty}\alpha_k=0$:
$$
D\{\alpha_k\}=\begin{pmatrix}
-\alpha_1\bar{\alpha_0}  & -\frac{k_0}{k_1}\alpha_2\bar{\alpha}_0 & -\frac{k_0}{k_2}\alpha_3\bar{\alpha}_0 & \ldots\ & -\frac{k_0}{k_{n-1}}\alpha_{n}\bar{\alpha}_0\\ 
\frac{k_0}{k_1} & -\alpha_2\bar{\alpha}_1  & -\frac{k_1}{k_2}\alpha_3\bar{\alpha_1}   &\ldots  &  -\frac{k_1}{k_{n-1}}\alpha_{n}\bar{\alpha}_1\\ 
0 &  \frac{k_1}{k_2} & -\alpha_3\bar{\alpha_2}  & \ldots  & -\frac{k_2}{k_{n-1}}\alpha_{n}\bar{\alpha}_2 \\ 
\vdots & \vdots & \vdots & \vdots &\vdots \\ 
0 & 0 & 0 & \ldots & -\alpha_n\bar{\alpha}_{n-1}&
\end{pmatrix},
$$
with the various $k_j$  related to $\{\alpha_k\}$ as follows: $k_0=1, k_j=k_{j-1}/\sqrt{1-|\alpha_k|^2}$. 
{It is well known that the elements of $D\{\alpha_k\}$ converges for large $n$ to those of $S_R$, the (finite) right shift matrix, see \cite{Escr2}. Here $S_R=\{\delta_{j+1,j}\}_{j\in{1,\ldots,n}}$. It can be seen that the faster the sequence $\{\alpha_k\}_{k\in\mathbb{N}}$ converges to zero,  the faster the eigenvalues of  $D\{\alpha_k\}$ converge also to zero.}

\subsubsection{Dynamical approach, experiment E2}
{We start our analysis on $D\{\alpha_k\}$ by first solving  \eqref{32bis} with $A=D\{\alpha_k\}$, with the size of matrix $n=15$, a variable  tolerance (depending on the eigenvalue considered\footnote{This is important since the eigenvalues can be very small, as already discussed.}) 
from $\delta_{tol}=10^{-10}$ to $\delta_{tol}=10^{-14}$, and  random initial conditions.
We stress that, for this experiment, due to the particularly small values of the eigenvalues, we need to use smaller tolerance than in experiments E1 above and E3 below.}

Fixing $\delta_{tol}=10^{-10}$, the convergence to the determined eigenvalues with the largest and smallest real part, $\lambda_1$ and $\lambda_{15}$ respectively, are shown in Figs.\ref{fig:Hess_dyna}-\ref{fig:Hess_dynb} respectively for the cases $\{\alpha_k\}_{k\in\mathbb{N}}=\{\exp(-k^2)\}_{k\in\mathbb{N}}$ and
$\{\alpha_k\}_{k\in\mathbb{N}}=\{1/(k^2!)\}_{k\in\mathbb{N}}$: as previously explained the eigenvalue  $\lambda_{15}$ is easily retrieved by replacing  $A=D\{\alpha_k\}$ with $A=-D\{\alpha_k\}$ in \eqref{32bis}.

{As expected the solutions we obtain converge to eigenvalues which are very small and approach the value $0$: for the matrix $D\{\exp(-k^2)\}$ of order $n=15$, we obtain $\lambda_1=0.0059$
and $\lambda_{15}=-0.0233$, whereas for the matrix $D\{(1/(k^2!)\}$ of order $n=15$, the converging values are $\lambda_1=6.58\cdot 10^{-5}$ and $\lambda_{15}=-0.0417$.

To retrieve the other eigenvalues we have to  decrease further the tolerance to 
$\delta_{tol}=10^{-14}$ (close to the minimal resolution allowed by the machine) as the other eigenvalues  decrease to zero very rapidly: for instance the subsequent eigenvalues retrieved tend to the values $1.81\cdot10^{-8}$ and $1.32\cdot 10^{-11}$, which of course require a very small tolerance to be well determined.
}

\begin{figure}[!ht]
	\begin{center}	
		\hspace*{-2cm}\subfigure[Experiment E2: convergence of real and imaginary parts of the eigenvalues having the largest  ($\lambda_1$) and the smallest ($\lambda_{15}$) real part obtained with \eqref{32}
		for the Hessenberg matrix $D\{\exp(-k^2)\}$ of order $n=15$. In the inset the early time evolution.]
		{\includegraphics[width=8.5cm]{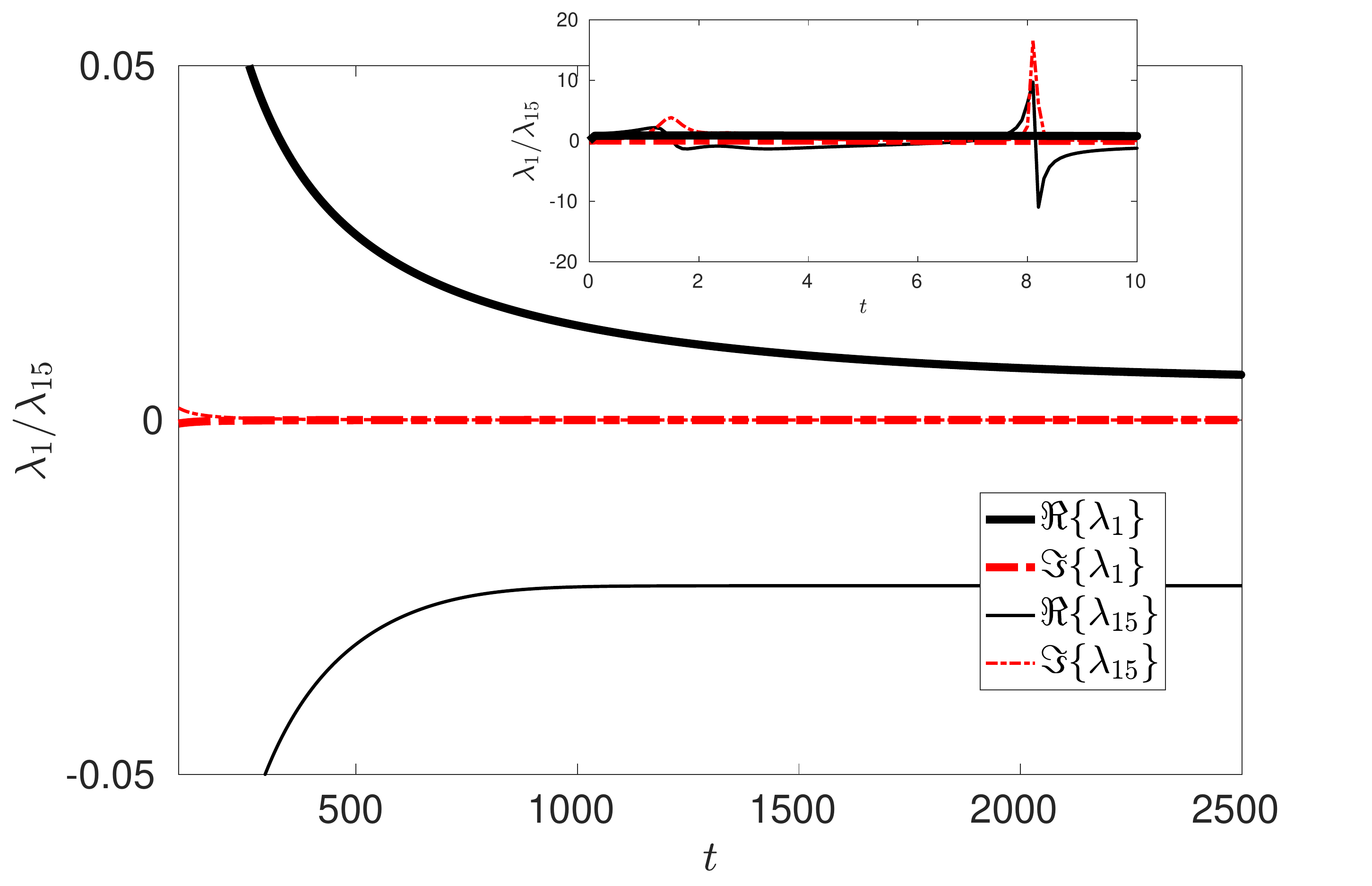}\label{fig:Hess_dyna}}
		\hspace*{1cm}\subfigure[Experiment E2: convergence of real and imaginary parts of the eigenvalues having the largest  ($\lambda_1$) and the smallest ($\lambda_{15}$) real part obtained with \eqref{32}
		for the Hessenberg matrix $D\{1/(k^2!)\}$ of order $n=15$. In the inset the early time evolution.]{\includegraphics[width=8.5cm]{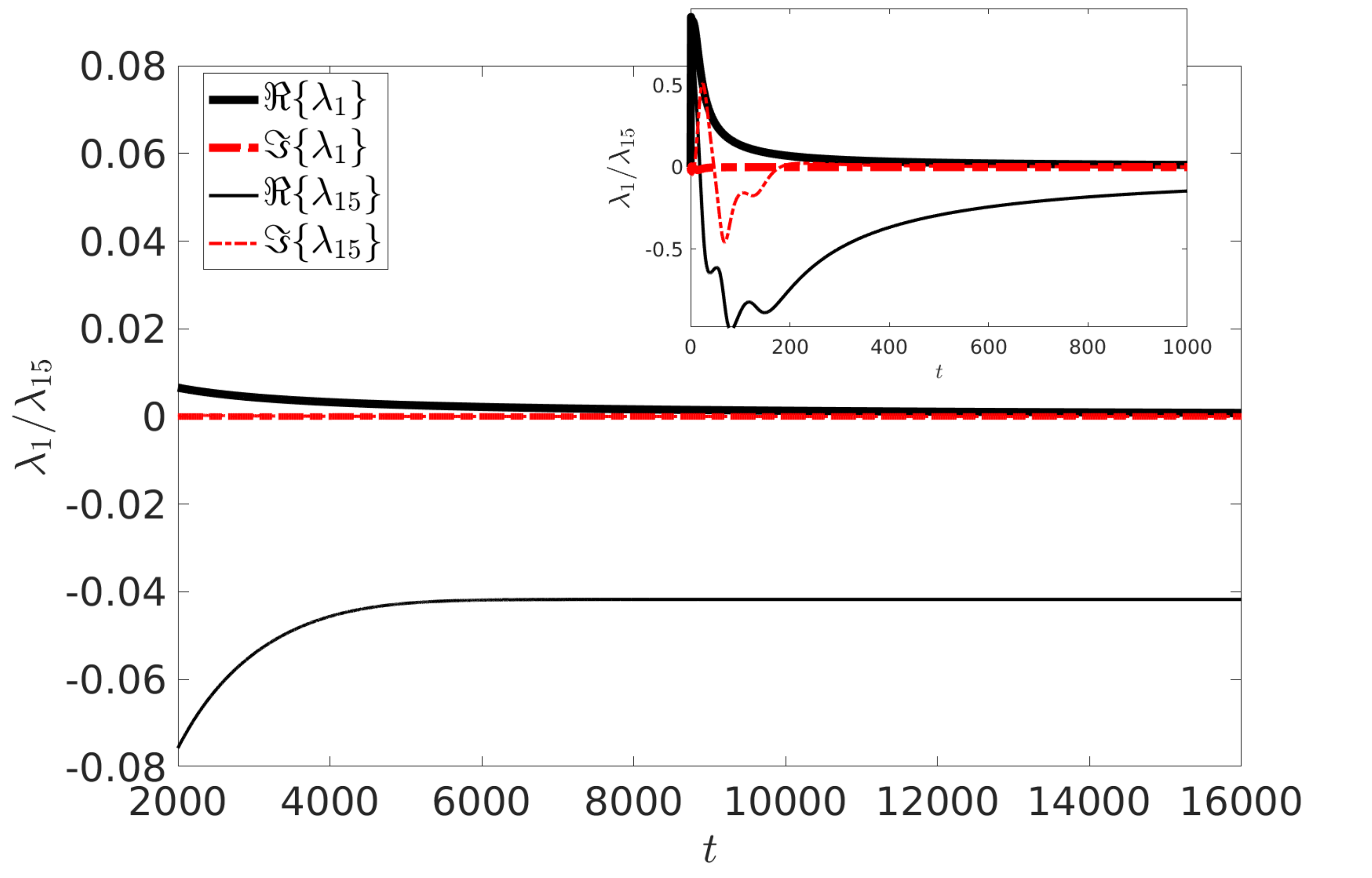}\label{fig:Hess_dynb}}
			\caption{Numerical results of dynamical approach  applied to the experiment E2. }	
	\end{center}
\end{figure}

\subsubsection{Fixed point approach, experiment E2}
In this subsection we consider the same matrix for the same sequences and the same tolerances, but we use the fixed point strategy. 
Convergence of the three largest eigenvalues in norm are shown in Figs.\ref{fig:Hess_fpa}-\ref{fig:Hess_fpnb} (only real part are shown). The eigenvalues we find are $\lambda_1=-0.0233,\lambda_2=0.0059,\lambda_3=-0.00069$ for the matrix $D\{\exp(-k^2)\}$ of order $n=15$ , and $\lambda_1=-0.0417,\lambda_2=6.58\cdot 10^{-5},\lambda_3=1.81\cdot 10^{-8}$ for the 
matrix $D\{1/(k^2!)\}$ of order $n=15$.

\begin{figure}[!ht]
	\begin{center}	
		\hspace*{-2cm}\subfigure[Experiment E2: convergence of real parts of the three largest in norm eigenvalues for the Hessenberg matrix $D\{\exp(-k^2)\}$ of order $n=15$ with the fixed point strategy.]
		{\includegraphics[width=8.5cm]{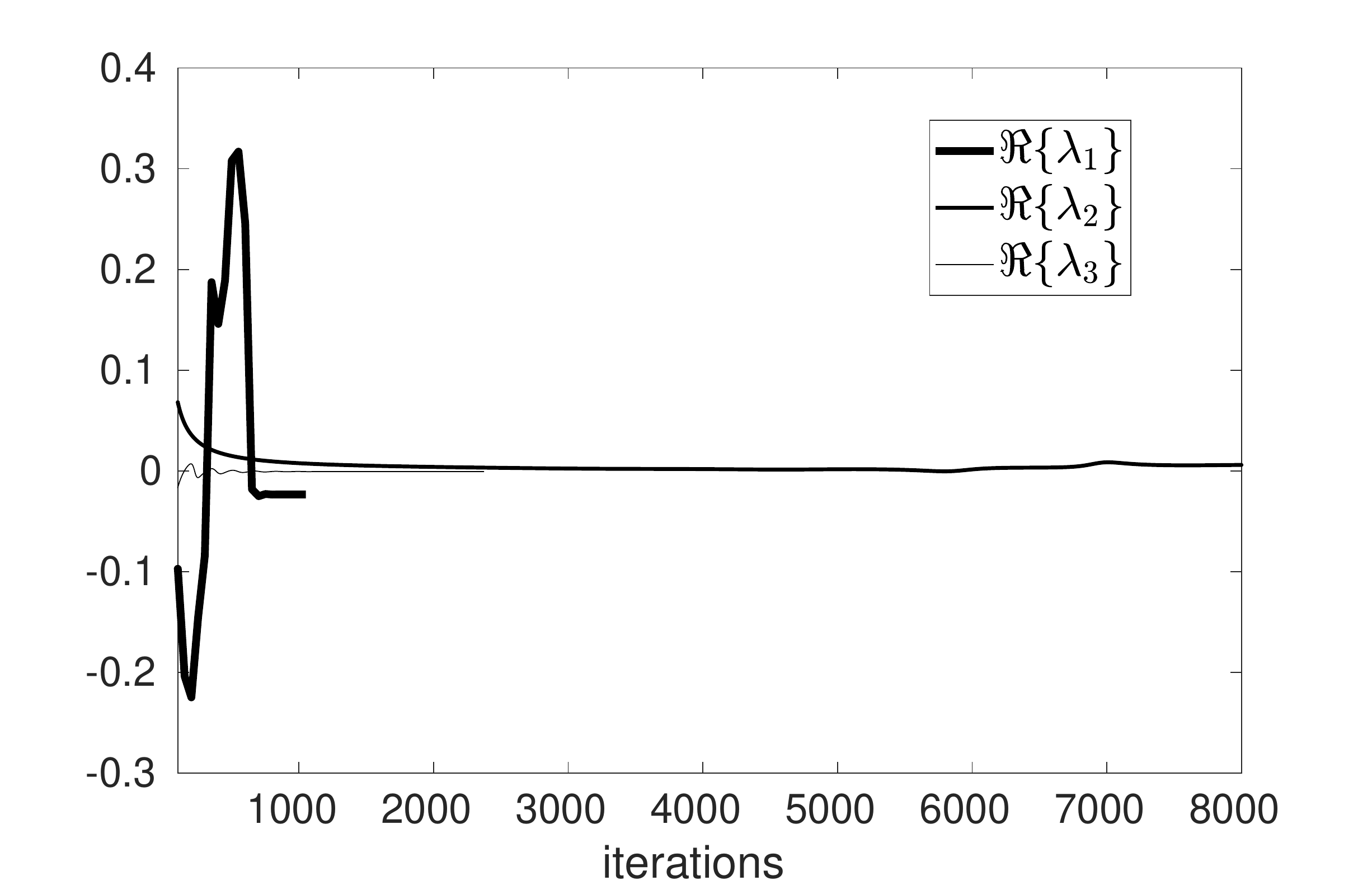}\label{fig:Hess_fpa}}
		\hspace*{1cm}\subfigure[Experiment E2: convergence of real parts of the three largest in norm eigenvalues for
		 the Hessenberg matrix$D\{1/(k^2!)\}$ of order $n=15$ with the fixed point strategy. ]{\includegraphics[width=8.5cm]{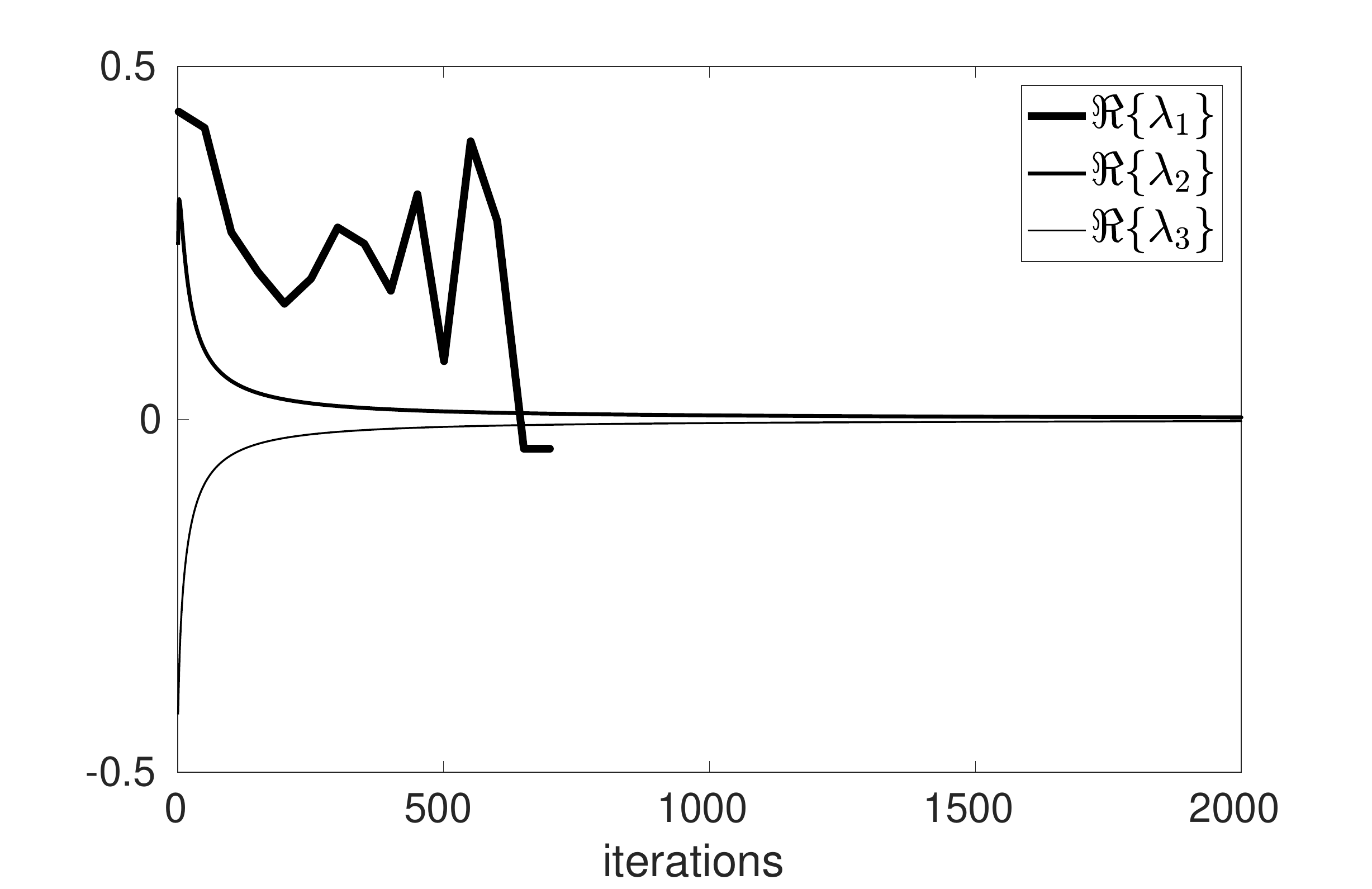}\label{fig:Hess_fpnb}}	
		\caption{Numerical results of fixed point strategy  applied to the experiment E2. }	
	\end{center}
\end{figure}

\subsection{Application to Quantum Mechanics: the truncated Swanson model}
In this third numerical experiment (E3), we work with a finite matrix that has a relevance in the contest of
pseudo-Hermitian Quantum Mechanics. In particular we consider the truncated Swanson model (hereafter TSM),
characterized by a finite Hamiltonian matrix which is not self adjoint, but which still admits only  real eigenvalues.
In the following we will briefly recall how this Hamiltonian can be obtained, while we refer the interested reader to \cite{bag2018} for more details and, in particular, for the physical relevance of this model.

The TSM Hamiltonian can be written as 
$$
H_{\theta}=\frac{1}{\cos(2\theta)}\left(B_\theta A_\theta+\frac{1}{2}\left(\1-Nk\right)\right),
$$
where $N$ is a non negative integer fixing the dimension of the system, 
$\theta\in(-\pi/4,\pi/4)\backslash\{0\}$ is a parameter tuning the non Hermiticity of the system,
$A_\theta$ and $B_\theta$ are operators (matrices) satisfying the commutation rule
$$
\left(A_\theta B_\theta -B_\theta A_\theta\right) f=f-Nkf,\quad \forall f\in \mathbb{C}^N,
$$ 
and $k$ is a projection operator
which annihilates the vector $e_N$ of the canonical orthonormal basis of $\mathbb{C}^N$, and satisfies  $k = k^
2 = k^\dag$ together with $kA_\theta=B_\theta k=0$. 

In \cite{bag2018} it is shown that $H_\theta$ is similar to the truncated quantum harmonic oscillator Hamiltonian 
$h$,
\be
H_\theta=T_\theta h T_\theta^{-1},
\label{add4}\en
where  $T_\theta=\exp\left(i\theta(a^2-(a^\dagger)^2)\right)$ and $a$ is the truncated annihilator operator defined in \cite{batrunc}, which satisfies the ladder equations on the basis $\{e_k\}_{k=1,\ldots N}$:
$$
ae_1=0,\quad a e_k=\sqrt{k}e_{k-1},\quad a^\dag e_k=\sqrt{k+1}e_{k+1}, \quad a^\dag e_N=0.
$$

Equation (\ref{add4}) implies that $H_\theta$ has the same spectrum as $h$, $\mu_k=\frac{2(k-1)+1}{2}, k=1,\ldots,N.$\footnote{Notice that maintaining the same formalism of the previous sections, the numerical eigenvalues, ordered from the one with the largest real part to the lowest, are labelled as $\lambda_1,\ldots,\lambda_N$, so that $\lambda_k$ will correspond to $\mu_{N+1-k}$, $k=1,\ldots,N$. }
For concreteness, we fix now $N=7$ and  $\theta=0.4$. Then the TSM Hamiltonian has the following form:
$$
H_\theta=\left(
\begin{array}{ccccccc}
 0.348126 & 0 & -0.510541 i & 0 & 0.0140773 & 0 &  0.0216558 i \\
 0 & 1.05673 & 0 & -0.805139 i & 0 & -0.145695 & 0 \\
 -0.510541 i & 0 & 1.79157 & 0 &  -1.01756 i & 0 & -0.372785 \\
 0 & -0.805139 i & 0 & 1.9337 & 0 &  -2.76093 i & 0 \\
 0.0140773 & 0 &  -1.01756 i & 0 & 2.0337 & 0 &  -4.04439 i \\
 0 & -0.145695 & 0 & -2.76093 i & 0 & 7.50957 & 0 \\
  0.0216558 i & 0 & -0.372785 & 0 &  -4.04439 i & 0 & 9.8266 \\
\end{array}
\right)
$$
\subsubsection{Dynamical approach, experiment E3}
We solve \eqref{32bis} for the matrices $H_\theta$ and $H_\theta^\dag$ with the tolerance
$\delta_{tol}=10^{-8}$. Starting with  random initial conditions, the solutions $x_{\varphi_1}^\infty$ and $x_{\Psi_1}^\infty$  related to the first eigenvalues $\lambda_1$, are
  $$x_{\varphi_1}^\infty=\left( \begin{array}{l}
  -0.0088 + 0.0149i\\
  -0.0000003 + 0.000001i\\
  -0.1676 - 0.0983i\\
   0.0000001 - 0.0000001i\\
   0.3206 - 0.5465i\\
   0.0000001 + 0.0000002i\\
   0.6457 + 0.3789i
   \end{array}\right),\quad 
   x_{\Psi_1}^\infty=\left( \begin{array}{l}
        0.00014 - 0.000243444797343i\\
        -0.00000136 - 0.000000641i\\
        -0.0027 - 0.0016i\\
        -0.00000028 + 0.00000065i\\
        -0.0052 + 0.0089i\\
         0.0000082 + 0.00000038i\\
         0.0105 + 0.0061
      \end{array}\right).
   $$

In Figure \ref{fig:7x7swaa} the convergence of the absolute values of the components of the dynamical solution $x_{\varphi_1}(t)$ converging to  $x_{\varphi_1}^\infty$  is shown (we get a similar behaviour for the convergence to $x_{\Psi_1}^\infty$, not shown  in figure). Real and imaginary parts of the eigenvalue $\lambda_1$ obtained from 
   \eqref{332}, are shown in
   Figure \ref{fig:7x7swac}, and as expected they both converge to the eigenvalue of $H_\theta$ with the largest real part, that is $\mu_7=6.5$. Considering the results related to the lowest eigenvalue, $\mu_1=0.5$, retrieved by switching to $-H_\theta$ and $-H_\theta^\dag$, the  converging solutions $x_{\varphi_7}^\infty$ and $x_{\Psi_7}^\infty$ are
     $$x_{\varphi_7}^\infty=\left( \begin{array}{l}
    0.8363 - 0.4557i\\
       0.0000002 - 0.0000003i\\
       0.1362 + 0.2499002i\\
       0.0000004 + 0.0000i\\
      -0.0908 + 0.0495i\\
      -0.0000003 + 0.0000001i\\
      -0.0171 - 0.0313i\\
      \end{array}\right),\quad 
      x_{\Psi_7}^\infty=\left( \begin{array}{l}
     -0.6568 + 0.3579i\\
      -0.0000001 + 0.00000002i\\
       0.1070 + 0.1963i\\
       0.0000004 + 0.00000003i\\
       0.0713 - 0.0389i\\
       0.0000005 - 0.00000001i\\
      -0.0134 - 0.0246i\\
         \end{array}\right).
      $$
   Convergence of the absolute values of the related dynamical solution $x_{\varphi_7}(t)$ to $x_{\varphi_7}^\infty$  is shown in \ref{fig:7x7swab}, whereas the real and imaginary parts of the corresponding eigenvalue $\lambda_7$ obtained from 
   \eqref{332} are shown in
   Figure \ref{fig:7x7swac}.
\begin{figure}[!ht]
	\begin{center}	
		\hspace*{-1.5cm}\subfigure[Experiment E3: time evolutions of the components of $|x_{\varphi_1}(t)|$ obtained from \eqref{32}]{\includegraphics[width=8.5cm]{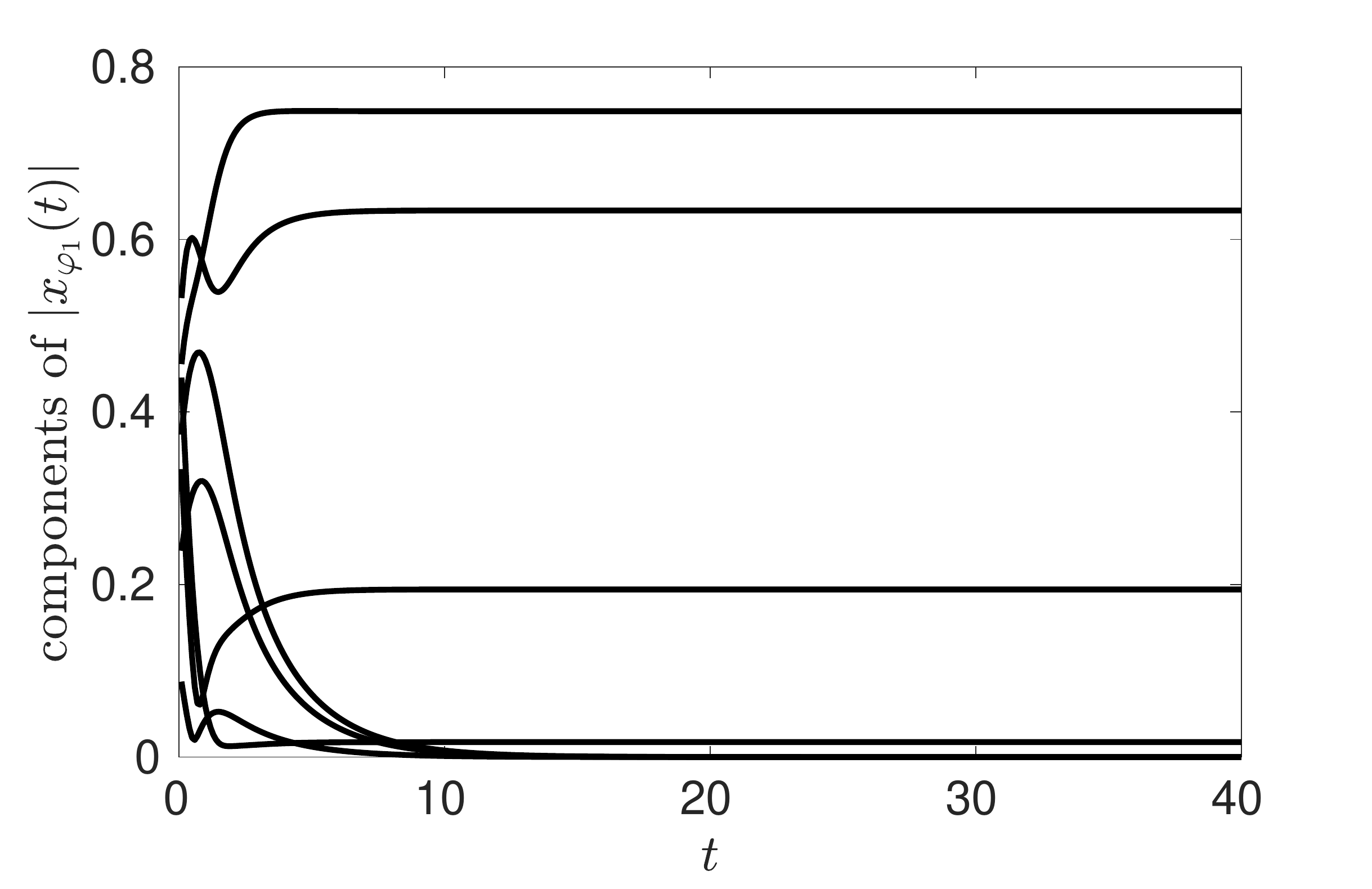}\label{fig:7x7swaa}}
			\hspace*{1cm}\subfigure[Experiment E3: time evolutions of the components of $|x_{\varphi_7}(t)|$ obtained from \eqref{32}]{\includegraphics[width=8.5cm]{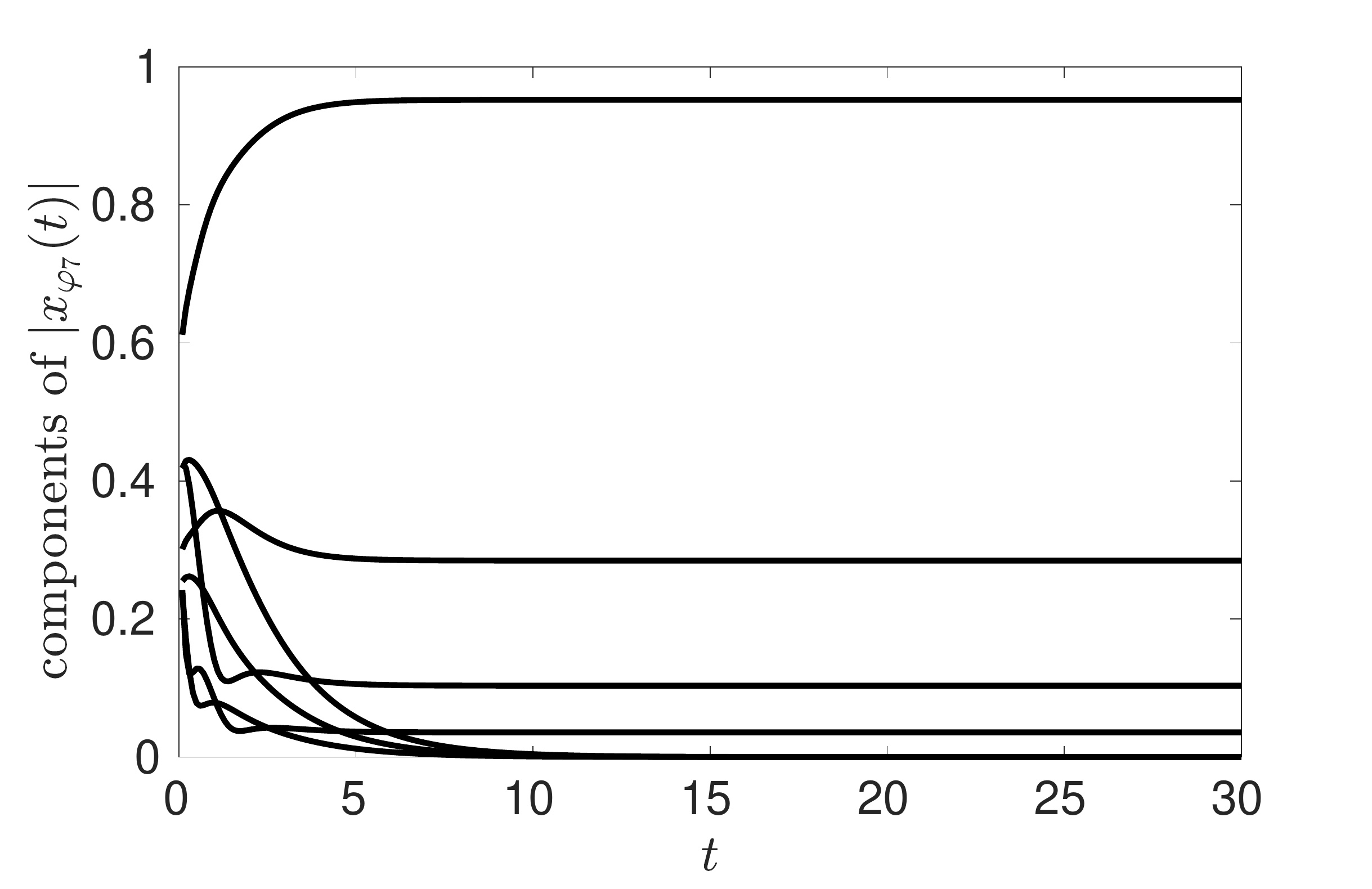}\label{fig:7x7swab}}	
	\subfigure[Experiment E3: time evolutions of the real and imaginary parts of the eigenvalues $\lambda_1$ and $\lambda_7$ of $A$]{\includegraphics[width=8.5cm]{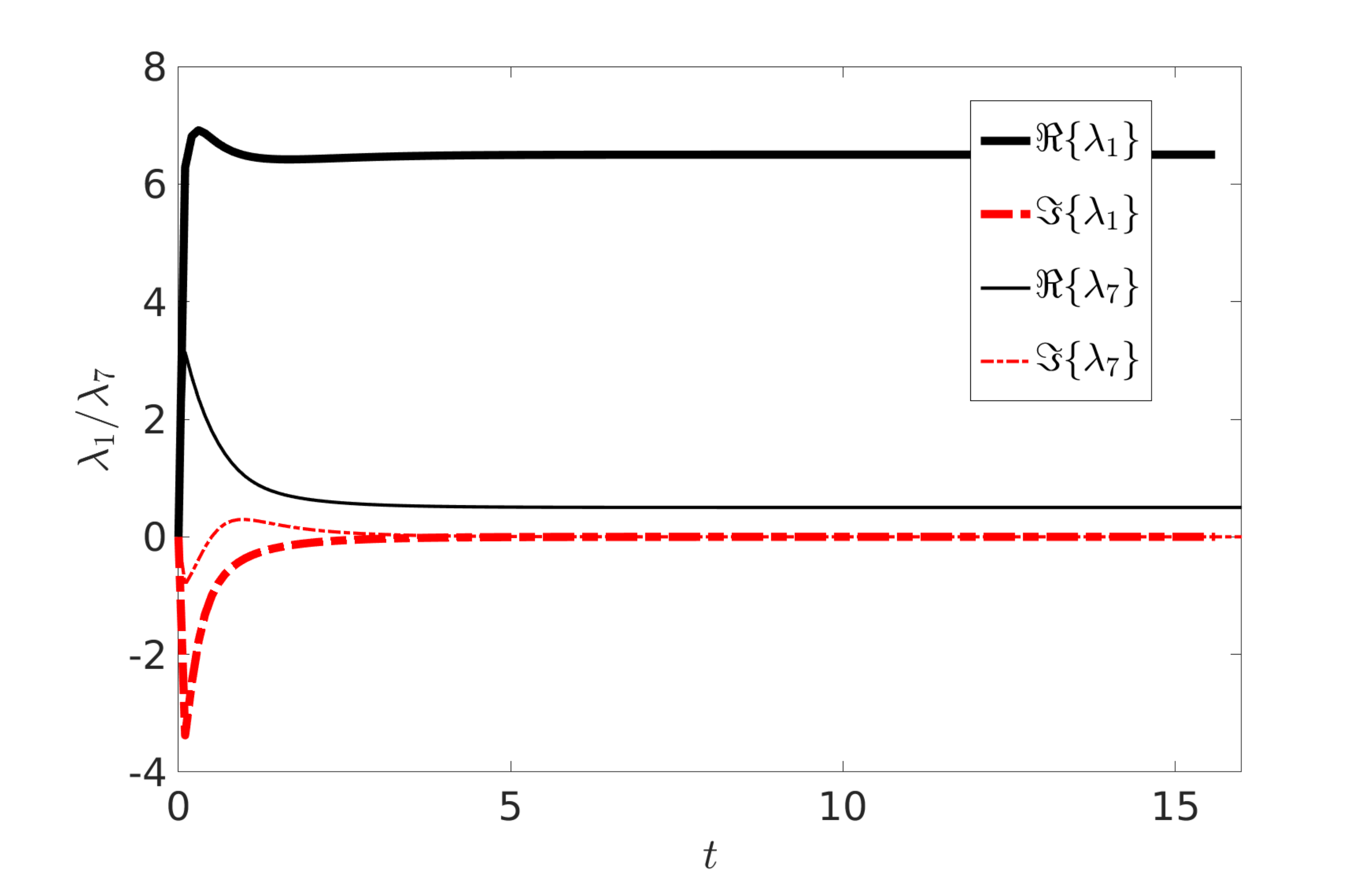}\label{fig:7x7swac}}
		\caption{Numerical results of dynamical approach applied to the experiment E3. }
		
	\end{center}
\end{figure}

\subsubsection{Fixed point approach, experiment E3}
In this subsection we show the results concerning the fixed point strategy applied to  $H_\theta$, again  with $N=7$, $\theta=0.4$ and a tolerance $\delta_{tol}=10^{-8}$. 
We report in Figure \ref{fig:SWAfp} (for simplicity only the real parts are shown) the convergence of the seven eigenvalues after the applications of the shifted inverse power method. Of course the number of iterations needed for convergence is highly sensitive to  the randomly value $q$ used to generate the inverse matrix $A-q\1$, as it is clearly shown in Figure \ref{fig:SWAfp}: some eigenvalue is found after just few iterations, while others need more iterations to be reached.

\begin{figure}[!ht]
\begin{center}
		\subfigure[Experiment E3: time evolutions of the real  part of the 7 computed eigenvalues  $\lambda_1,\ldots,\lambda_7$.]{\includegraphics[width=8.0cm]{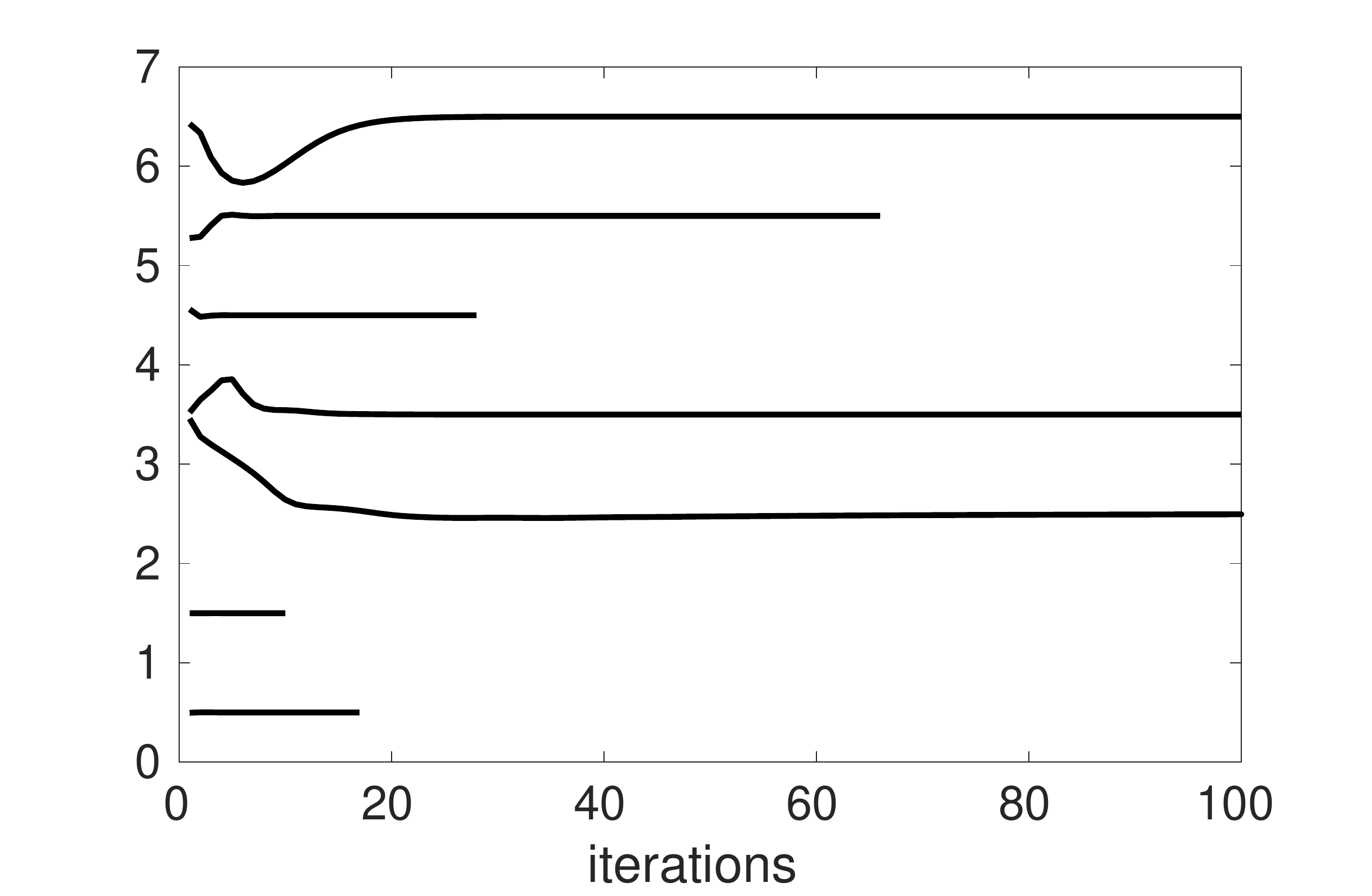}}
		\caption{Numerical results of fixed point approach applied to the experiment E3. }
		\label{fig:SWAfp}
		\end{center}
\end{figure}

\section{Conclusions}\label{sect5}

In this paper we have discussed how some standard techniques existing in the literature to compute eigenvalues and eigenvectors of a given Hermitian matrix can be extended to include matrices which are not Hermitian. { In particular we have considered a dynamical approach based on the solution of a system of ODEs, which naturally extends the procedure proposed in \cite{tang}, and a fixed point approach based on the construction of a suitable contraction as in \cite{schae}.  Many scientific and data analysis applications require the determination of the eigensystem of Hermitian matrices, while  in this work our main interest is moved to non Hermitian matrices. This kind of matrices, if seen as bounded (or even unbounded, if the matrices are infinite) operators, are quite often met in pseudo-hermitian quantum mechanics, where the Hamiltonian of a given system is not required to be Hermitian at all, but rather to satisfy some invariance property, \cite{benbook}.}
In this case, it is very likely that the eigenvalues of the Hamiltonian are complex and the mathematical and numerical setting proposed in \cite{tang} and in \cite{schae} should be adjusted to consider the appearance of biorthogonal sets of eigenvectors. 
{In this work we have extended the mathematical procedures presented in the aforementioned papers, and we have applied them to two pedagogical examples generated by a random matrix and by an Hessenberg matrix, and to an Hamiltonian operator obtained from a  finite-dimensional version of the Swanson model.
Our methodologies worked very well, and we were able to determine easily the eigensystem of the various matrices.} We should also mention that, as it is well known, when a given matrix possesses some symmetry, its eigenvalues obey some special rule. For instance, if $H$ is a matrix which is similar to a Hermitian operator $H_0$, i.e. if $H=SH_0S^{-1}$ for some invertible matrix $S$, then the eigenvalues of $H$ are all real. This is what happens, for instance, in PT- or pseudo-Hermitian quantum mechanics, \cite{benbook,op3}, and in our Experiment 3. Symmetries have played no role in this paper, but hopefully they will be considered in some future work.

{We stress that our work is mainly intended to provide some rigorous mathematical framework to deal with the eigenvalue problems considered in this paper, and in general for non Hermitian matrices, without focusing on any possible acceleration methods.} Of course,
modern information processing requires the solution of eigenvalue problems for very large matrices, and hence a consistent but also fast method is required.  A lot can still be done in trying to accelerate the convergence of the methods considered here.  This could involve the application of suitable acceleration procedures to speed up the convergence of the iterations, and this is just a part of our future works. Another relevant extension of our results should include those situations for which the convergences of our approaches fail. This is the case, for instance, when the real parts (resp. the norms) of the eigenvalues coincide, see Section \ref{sect3} (resp. Section \ref{sec:fixedpoint}).
We are  planning to consider a possible extension of the dynamical approach adopting (or extending) the fractal variational principle already used in literature for a quite wide class of applications (\cite{frac1,frac2,frac3}).
Also, we are interested in extending our ideas to infinite-dimensional matrices, and to compare our results with those in \cite{laub}. This is particularly interesting for us, in view of our interest for solving Schr\"odinger equations for concrete systems living in infinite-dimensional Hilbert spaces.

\section*{Acknowledgements}

This work was partially supported by the University of Palermo and by the Gruppo Nazionale di Fisica Matematica of Indam.
The authors want to thank Dr. Dario Armanno and the unknown Reviewers  for the useful comments and discussions which helped  to improve the final version of the paper.

\end{document}